\newtheorem{thm}{Theorem}
\newtheorem{prop}{Proposition}
\newtheorem{lem}{Lemma}
\newcommand{\bu}{\mbox{\bf u}}  
\newcommand{\bD}{\mbox{\bf D}}  
\newcommand{\bI}{\mbox{\bf I}}  
\newcommand{\bU}{\mbox{\bf U}}  
\newcommand{\bZ}{\mbox{\bf Z}}  
\newcommand{\balpha}{\mbox{\boldmath $\alpha$}}
\newcommand{\bDelta}{\mbox{\boldmath $\Delta$}}
\newcommand{\bbeta}{\mbox{\boldmath $\beta$}}
\newcommand{\btheta}{\mbox{\boldmath $\theta$}}
\newcommand{\bkappa}{\mbox{\boldmath $\kappa$}}
\newcommand{\bgamma}{\mbox{\boldmath $\gamma$}}
\newcommand{\bxi}{\mbox{\boldmath $\xi$}}
\newcommand{\bleta}{\mbox{\boldmath $\eta$}}
\newcommand{\bmu}{\mbox{\boldmath $\mu$}}
\newcommand{\bSig}{\mbox{\boldmath $\Sigma$}}
\newcommand{\diag}{\mathrm{diag}} 
\newcommand{\wtbbeta}{\widetilde\bbeta}
\newcommand{\wtbalpha}{\widetilde\balpha}
\newcommand{\talpha}{\tilde\alpha}
\newcommand{\ttheta}{\tilde\theta}
\def\beqn{\begin{eqnarray}}
\def\eeqn{\end{eqnarray}}
\def\beqns{\begin{eqnarray*}}
\def\eeqns{\end{eqnarray*}}
\def\0{{\bf 0}}
\def\A{{\bf A}}
\def\b{{\bf b}}
\def\D{{\bf D}}
\def\d{{\bf d}}
\def\I{{\bf I}}
\def\bO{{\bf O}}
\def\s{{\bf s}}
\def\W{{\bf W}}
\def\w{{\bf w}}
\def\X{{\bf X}}
\def\x{{\bf x}}
\def\y{{\bf y}}
\def\Z{{\bf Z}}
\def\1{{\bf 1}}
\def\trans{^{\rm T}}
\newcommand{\beq}{\begin{equation}}
\newcommand{\eeq}{\end{equation}}
\newcommand{\bes}{\begin{eqnarray*}}
\newcommand{\ees}{\end{eqnarray*}}
\newcommand{\bi}{\begin{itemize}}
\newcommand{\ei}{\end{itemize}}
\newcommand{\bSigma}{\boldsymbol{\Sigma}} 
\newcommand{\bLambda}{\boldsymbol{\Lambda}}
\newcommand{\bOmega}{\boldsymbol{\Omega}}
\newcommand{\nn}{\nonumber}
\newcommand{\bpi}{\boldsymbol{\pi}}
\newcommand{\bphi}{\mbox{\boldmath $\phi$}}
\newsavebox{\tablebox}
\def\bu{\boldsymbol u}
\def\bfA{\mathbf A}
\def\bfI{\mathbf I}
\def\bfZ{\mathbf Z}
\def\bfalpha{\boldsymbol \alpha}
\def\bfbeta{\boldsymbol \beta}
\def\bfgamma{\boldsymbol \gamma}
\def\bftheta{\boldsymbol \theta}
\def\bfeta{\boldsymbol\eta}
\def\bfzero{\boldsymbol 0}
\def\diag{\mbox{diag}}
\def\bSig\mathbf{\Sigma}
\title{Bayesian thresholded modeling for integrating brain node and network predictors}
\author[1]{Zhe Sun}
\author[1]{Wanwan Xu}
\author[2]{Tianxi Li}
\author[3]{Jian Kang}
\author[4]{Gregorio Alanis-Lobato}
\author[1]{Yize Zhao\thanks{ Correspondence author. Email: yize.zhao@yale.edu}}
\affil[1]{Department of Biostatistics, Yale University, New Haven, CT, USA}
\affil[2]{School of Statistics, University of Minnesota, Minneapolis, MN, USA}
\affil[3]{Department of Biostatistics, University of Michigan, Ann Arbor, MI, USA}
\affil[4]{Global Computational Biology and Data Sciences, Boehringer Ingelheim Pharma GmbH \& Co. KG, Biberach an der Riss, Germany}
\date{}
\begin{document}

\markboth%
{Z. Sun and others}
{Thresholded modeling for node and network}

\maketitle


\label{firstpage}

\begin{abstract}
\singlespacing
Progress in neuroscience has provided unprecedented opportunities to advance our understanding of brain alterations and their correspondence to phenotypic profiles. With data collected from various imaging techniques, studies have integrated different types of information ranging from brain structure, function, or metabolism. More recently, an emerging way to categorize imaging traits is through a metric hierarchy, including localized node-level measurements and interactive network-level metrics. However, limited research has been conducted to integrate these different hierarchies and achieve a better understanding of the neurobiological mechanisms and communications. In this work, we address this literature gap by proposing a Bayesian regression model under both vector-variate and matrix-variate predictors. To characterize the interplay between different predicting components, we propose a set of biologically plausible prior models centered on an innovative joint thresholded prior. This captures the coupling and grouping effect of signal patterns, as well as their spatial contiguity across brain anatomy. By developing a posterior inference, we can identify and quantify the uncertainty of signaling node- and network-level neuromarkers, as well as their predictive mechanism for phenotypic outcomes. Through extensive simulations, we demonstrate that our proposed method outperforms the alternative approaches substantially in both out-of-sample prediction and feature selection. By implementing the model to study children's general mental abilities, we establish a powerful predictive mechanism based on the identified task contrast traits and resting-state sub-networks.
\end{abstract}

\noindent{\bf Keywords.} Bayesian model; Brain connectivity; Data integration; Scalar-on-Image; Thresholded model.

\clearpage
\doublespacing

\section{Introduction}\label{sec:intro}
It is well known that brain activity is closely linked to human behavior. Thanks to the latest growth in neuroimaging technologies, we are now able to better characterize brain structural and functional alterations from different aspects and hierarchies \citep{zhu2022statistical}. This enables us to investigate brain-to-behavior correspondence and the underlying physiological bases of brain mechanisms associated with behavior and mental processes.

With medical imaging techniques becoming more affordable, most brain imaging studies have started to simultaneously collect different types of images for each participant. By integrating these multiple types of imaging traits, we can better capture changes in the brain and learn about their consequences. Currently, such procedures have mainly been viewed as multi-modality integration problems, for instance, combining measurements from structural magnetic resonance imaging (MRI) and positron emission tomography (PET) to capture both brain structural and functional information. In general, all these modalities are likely summarized by the same type of variate, and further linked together by dissecting their relationships \citep{zhao2021multimodal,zhao2022bayesian} or shared latent structures \citep{zhou2020multi,ma2022multimodal}. On the other hand, most commonly adopted imaging traits can be roughly categorized into two types of variates--measurements at isolated brain locations at voxels or regions of interest (ROIs) referred to as ``nodes'', and patterns of communications across nodes over the brain along edges or connectivity considered as ``networks''.
Given that the information offered by these two types of components is potentially complementary and collected under distinct levels/hierarchies, i.e., node-level and edge-level, it is of great interest to jointly associate both regional measures and network imaging traits with phenotypic outcomes to build more powerful imaging-based predictive mechanisms.

From an analytical perspective, we can summarize the node-level measurements at $P$ nodes by a vector on $\mathbb{R}^{P\times 1}$, with each element measuring the imaging trait at individual locations. This allows us to adopt a variety of high-dimensional regression methods with penalized likelihood or Bayesian variable selection/shrinkage priors to predict behavior \citep{fan2010selective,o2009review}, and some refined versions to further accommodate the spatial correlation among imaging traits \citep{wang2017generalized,goldsmith2014smooth,kang2018scalar,10.1214/15-AOAS818,feng2020bayesian}. For the whole brain network or connectivity, under the same brain atlas or voxels, they can be represented by a symmetric matrix on $\mathbb{R}^{P\times P}$, with each element describing the connection between a pair of nodes. By extracting the unique elements from the matrix, most of the existing methods try to link network predictors with behavior by simplifying the matrix-variate to a single metric or vector \citep{cohen2016segregation,shen2017using, gao2019combining}. However, such simplification techniques often overlook the underlying topological structure of the brain network. Alternatively, by directly taking the matrix as input, low-rank decompositions or graphical representations have been adopted to capture the latent network structure and reduce the feature space \citep{zhou2013tensor,https://doi.org/10.1002/sim.9488,li2018tucker}. Although the two types of imaging features summarized by vector- and matrix-variate offer distinct neurobiological information, little attention has been given yet to integrating them under a joint framework.

In this work, we propose an innovative scalar-on-image regression model based on both vector-variate and matrix-variate neuroimaging traits. Under a unified Bayesian framework, we simultaneously identify signaling node-level and sub-network features accommodating the spatial contiguity across brain anatomy and the biological correlation between different feature components. Specifically, we anticipate an interesting and prevalent interaction between these two types of traits. The most explicit examples are integrating node-level structural imaging with structural connectivity, and node-level functional imaging with functional connectivity. Clearly, the size and shape of a brain location captured by structural MRI (sMRI) would be associated with the anatomical white matter fiber pathways (structural connection) through this location  measured by diffusion MRI (dMRI); and the functional alteration of a brain node measured by functional MRI (fMRI) or PET would also link with the correlation of fMRI time course (functional connection) generated under this node. In other words, there are complex correlations among imaging predictors that could induce coherence in their impact on the outcome. To the best of our knowledge, the closest problem setting to the current one is regression modeling with grouped or structured covariates.  Under frequentist paradigms, different penalty functions have been proposed to encourage similarity in the effect size or significant status
if two covariates are in the same group or correlated due to similar structures
\citep{simon2013sparse,li2008network,zhao2016hierarchical}. In Bayesian models, popular prior choices including a combination of spike and slab prior with binary Markov random field prior \citep{li2010bayesian,stingo2011incorporating},  and graphical prior embedded with shrinkage prior \citep{chang2018scalable} can also achieve similar goals.
However, unlike existing scenarios, the correlations among covariates in our problem arise from features present on both the nodes and edges connecting these nodes in a network.

This work makes several important contributions to the literature. First, we propose a novel analytical framework that integrates vector- and matrix-variate predictors to model their joint effect on the outcome. This approach fills a critical gap in neuroscience research by enabling the study of hierarchical brain alterations and their impact on phenotypic outcomes. Second, we develop an innovative joint thresholded prior model extending the existing thresholded prior literature \citep{ni2019bayesian,Cai2020,wu2022bayesian} to identify informative node- and network-level features, as well as capture the interplay and coupling effect of their signal patterns.  Instead of modeling individual edges, we also explicitly account for the topological architectures of the network components and uncover sub-network structures to improve biological plausibility and interpretability. Finally, we implement our method to the latest landmark children's brain developmental study and achieve a substantially improved predictive performance for general mental abilities based on brain task contrast maps and resting-state connectivity. Our results highlight the potential of integrating different levels of neuronal activities to inform human behavior profiles.

The rest of the article is organized as follows. In Section \ref{sec:model}, we introduce the model formulation under vector- and matrix-variate predictors and develop joint thresholded priors to accommodate their interplay. In Section \ref{sec:comp}, we provide the posterior inference. We demonstrate the performance of our method compared with existing ones and our own variations via simulations in Section \ref{sec:simu}, followed by the application to the Adolescent Brain Cognitive Development (ABCD) study in Section \ref{sec::app}. We conclude our paper with a discussion in Section \ref{sec:dis}.

\section{Model} \label{sec:model}
\subsection{Modeling setup}
Suppose there are $N$ subjects in a brain imaging study. For subject $i (i=1,\dots,N)$, based on the collected brain imaging data, multiple types of neuroimaging traits are summarized under a brain atlas with $P$ regions of interest (ROIs) or nodes. These include a regional metric summarized across all the nodes denoted by a vector $\x_i \in \mathbb{R}^{P\times 1}$; and a whole brain network measure, i.e., brain connectivity, represented by a symmetric matrix $\Z_i\in \mathbb{R}^{P\times P}$ with each of its off-diagonal element $z_{ikl}\in \mathbb{R}$ characterizing the connection between nodes $k$ and $l$ with $k\neq l$, and we set $z_{ikl}=0$ when $k=l$. Our goal is to study how the regional and network neuromarkers impact a continuous phenotypic outcome of interest denoted by $y_i$ in an interactive way, while adjusting for non-imaging covariates and clinical confounders summarized by $\w_i \in \mathbb{R}^{Q \times 1}$ with the first element being one to accommodate the intercept. As mentioned previously,
given the neurobiological dependence between features collected on brain nodes and the edges connecting these nodes, for instance, regional volume and structural connectivity, or activation map and functional connectivity, it is important to assume certain analytical correlation between the parameters associated with node-level features and the features along the connecting edges during the learning process.
To achieve this, we propose the following 
linear model with both vector-variate and matrix-variate predictors
\begin{align}
    y_i& =\bfeta\trans\w_i+\bfbeta\trans\x_i+\langle\A, \Z_i\rangle+\epsilon_i, \label{eq:origial}
\end{align}
where $\epsilon_i \overset{\text{i.i.d}}{\sim} \mbox{N}(0, \sigma_{\epsilon})$; $\bfeta\in \mathbb{R}^{Q\times 1}$ are the coefficients for the non-imaging covariates; $\bfbeta \in \mathbb{R}^{P\times 1}$ are the coefficients characterizing the effect from the node-wise measurements on the outcome; and matrix $\Z_i$ are associated with the outcome by an inner product with a symmetric coefficient matrix $\A=(a_{kl}) \in \mathbb{R}^{P \times P}$, given that $\langle\A, \Z_i\rangle=\mbox{vec}(\A)^T\mbox{vec}(\Z_i)$ with $\mbox{vec}(\cdot)$ being the vectorization operation on the matrix. With $a_{kl}=a_{lk}, k\neq l \in \{1,\dots,P\}$, each $a_{kl}$ captures the impact of connection/edge $z_{ikl}$ on the outcome.
Under a large number of unknown parameters in model \eqref{eq:origial}, particularly coming from the coefficient matrix $\A$, it is desirable to perform dimension reduction to reduce the parameter space. Moreover, brain network architectures have been shown to reveal a low-rank structure \citep{wang2019common}. These motivate us to  employ the following rank-$R$ PARAFAC decomposition \citep{kolda2009tensor} under a symmetric constraint
\begin{align} \label{eq:deco}
  \A=\sum_{r=1}^R\bfalpha^{(r)}\otimes\bfalpha^{(r)},
\end{align}
where $\bfalpha^{(r)} \in \mathbb{R}^{P \times 1}, r=1,\dots, R$ are column vectors, and operator ``$\otimes$'' denotes the outer product. Such a decomposition setup has been adopted previously for general tensor predictors in a regression model framework \citep{guhaniyogi2017bayesian,zhou2013tensor}. Under a special case when rank $R=1$, model \eqref{eq:deco} will reduce to a quadratic term with a bilinear version without a symmetric condition focused by \cite{hung2013matrix}. Combining models \eqref{eq:origial} and \eqref{eq:deco}, we can represent our model by
\begin{equation}
  \begin{split} \label{eq:decomposed}
    y_i
    & = \bfeta\trans\w_i+\bfbeta\trans\x_i+\langle\sum_{r=1}^R\bfalpha^{(r)}\otimes\bfalpha^{(r)}, \Z_i\rangle+\epsilon_i \\
    & = \bfeta\trans\w_i+\bfbeta\trans\x_i+\sum_{r=1}^R (\bfalpha^{(r)})\trans \Z_i\bfalpha^{(r)}+\epsilon_i.
  \end{split}
\end{equation}
Model \eqref{eq:decomposed} clearly reflects that the impact on the outcome from brain regional measurements is captured by a coefficient vector $\bfbeta$, and the impact from brain connectivity is characterized by $R$ components with each one represented by $\bfalpha^{(r)}\otimes\bfalpha^{(r)}=\bfalpha^{(r)}(\bfalpha^{(r)})^T\in\mathbb{R}^{P \times P}$. 
As is common practice for high-dimensional data, we first assume sparsity among $\bfbeta$ with the anticipation that only a small proportion of regional measurements in $\x_i$ are associated with the outcome. In terms of the connectivity effect, 
from a topological point of view, when $\bfalpha^{(r)}$ is sparse, the non-zero elements in the corresponding outer product matrix could characterize an informative clique sub-network linked with the outcome \citep{wang2021learning}. This indicates that, besides the regional neuromarkers, we can simultaneously identify the signaling sub-network configurations to help us better understand the neurobiological etiology and inform potential intervention targets for future clinical strategies.

Finally, to ensure that the identified signaling clique sub-networks from model \eqref{eq:decomposed} are uniquely defined, we denote $g_r \subset \{1, 2, \cdots, P\}$ to be the support set of $\bfalpha^{(r)}$, i.e., $g_r = \{j: \alpha^{(r)}_j \ne 0\}$ for $r=1, \cdots, R$. The uniqueness of the clique set $\{g_r\}_{r=1}^R$  can be guaranteed by the following theorem:
\begin{thm}\label{the}
\textit{Suppose $\{g_r\}_{r=1}^R$ is support consistent with coefficient matrix $\bfA$ and each $g_r$ contains at least one unique variable that is not included in other sets. The clique set $\{g_r\}_{r=1}^R$ is the unique and minimum support consistent clique set of matrix $\bfA$.}
\end{thm}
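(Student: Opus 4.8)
The plan is to recast the statement in graph-theoretic language and reduce it to a uniqueness-of-minimum-edge-clique-cover claim. Associate to $\bfA$ its support graph $G(\bfA)$ on vertex set $\{1,\dots,P\}$, placing an edge $(k,l)$ exactly when $a_{kl}\neq 0$. Since $\bfalpha^{(r)}\otimes\bfalpha^{(r)}$ has nonzero off-diagonal pattern $g_r\times g_r$, a clique set being \emph{support consistent} with $\bfA$ should mean precisely that its cliques form an \emph{exact} edge clique cover of $G(\bfA)$: an edge $(k,l)$ lies in $G(\bfA)$ if and only if some set in the cover contains both $k$ and $l$. Under this reading the theorem asserts that $\{g_r\}_{r=1}^R$ is the unique minimum-cardinality edge clique cover of $G(\bfA)$. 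The first step I would carry out is a local lemma exploiting the private-variable hypothesis: if $v_r\in g_r$ is the unique variable belonging to no other $g_s$, then every edge of $G(\bfA)$ incident to $v_r$ has its other endpoint in $g_r$ (because such an edge must be covered by some clique containing $v_r$, and $v_r$ lies only in $g_r$), while conversely all pairs within $g_r$ are edges. Hence the closed neighborhood satisfies $N[v_r]=g_r$, each $g_r$ is a maximal clique, and any clique of $G(\bfA)$ containing $v_r$ is contained in $g_r$.

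Next I would prove minimality. Assuming each meaningful $g_r$ induces at least one edge (i.e.\ $|g_r|\ge 2$), fix $u_r\in g_r\setminus\{v_r\}$ and consider the $R$ distinct edges $e_r=(v_r,u_r)$. Any clique that covers $e_r$ must contain $v_r$, hence is contained in $g_r$; since $v_s\notin g_r$ for $s\neq r$, no single clique can cover two of the edges $e_1,\dots,e_R$. As all $R$ of these edges must be covered, every support consistent clique set uses at least $R$ cliques, and $\{g_r\}_{r=1}^R$ attains this bound, so it is a minimum cover.

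For uniqueness I would take an arbitrary minimum (size-$R$) support consistent clique set $\{h_j\}_{j=1}^R$. By the previous paragraph each $h_j$ covers at most one of $e_1,\dots,e_R$, and all $R$ of them must be covered by the $R$ available cliques; a pigeonhole argument then forces a bijection, so after relabeling $h_r$ is the unique clique covering $e_r$. Consequently $v_r\in h_r\subseteq N[v_r]=g_r$ and $v_r$ lies in no other $h_j$, which means the whole star $\{(v_r,u):u\in g_r\setminus\{v_r\}\}$ must be covered by $h_r$ alone; this forces $g_r\subseteq h_r$, and combined with $h_r\subseteq g_r$ gives $h_r=g_r$ for every $r$. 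Hence the minimum support consistent clique set is exactly $\{g_r\}_{r=1}^R$, establishing both uniqueness and minimality.

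The main obstacle I anticipate is conceptual rather than computational: pinning down the definition of support consistency so that $G(\bfA)$ faithfully reflects the cliques. Because $a_{kl}=\sum_{r:\,k,l\in g_r}\alpha^{(r)}_k\alpha^{(r)}_l$, overlapping cliques could in principle produce sign cancellations that delete an edge from $G(\bfA)$; the support-consistency hypothesis must be understood to preclude this, so that the exact-cover interpretation is valid. Once that is settled, the remaining delicate step is the pigeonhole/bijection argument, which is where the private-variable assumption does its essential work; the rest is bookkeeping on neighborhoods and stars.
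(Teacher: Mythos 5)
Your proof is correct and follows essentially the same route as the paper's: your lemma that any clique containing the private vertex $v_r$ lies inside $g_r$ is the paper's containment $\tilde{g}_{k_r}\subseteq g_r$, and your star-covering step is the paper's $g_r\subseteq\cup_{k\in\Omega_r}\tilde{g}_k$, with your explicit pigeonhole on the edges $e_r=(v_r,u_r)$ merely spelling out the paper's terse final step that minimality forces $K=R$ and $\tilde{g}_r=g_r$ up to permutation. The cancellation caveat you raise is exactly what the paper handles by assuming support consistency as a hypothesis (noting it fails only on a negligible set of random events), so your reading of support consistency as an exact edge clique cover is the intended one.
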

The proof of Theorem \ref{the} is provided in Section A of the Supplementary Materials. Based on this theorem, in real practice, we can straightforwardly verify the uniqueness of the identified signaling sub-networks from the final numerical solutions by checking whether each of the identified clique sets provides at least one unique node index. Such uniqueness on the solution of $\{\bfalpha^{(r)}\}_{r=1}^R$ ensures the plausibility of our interpretation on the model estimations.
\subsection{Thresholded prior for node and network predictors} \label{sec:prior}
To jointly identify signals among vector-variate (node measure) and matrix-variate (connectivity measure) predictors in a biologically plausible way, in this section, we will discuss a set of new thresholded prior models we develop for the imaging-related coefficient components along with prior settings for the other nuisance- and hyper-parameters. As one of the essential components in our prior specifications, the neurobiological relationship between node and connectivity measurements needs to be carefully considered.

To select informative node-specific and sub-network imaging features, we impose sparsity on $\bfbeta$ and each $\bfalpha^{(r)}, r=1,\dots,R$ by first performing the following decomposition
\begin{align}\label{eq:prior1}
  \bfbeta&=\widetilde{\bfbeta} \circ \mathcal{T}(\bfgamma; \lambda),  \qquad
  \bfalpha^{(r)}=\widetilde{\bfalpha}^{(r)} \circ \mathcal{T}(\bftheta, \bftheta^{(r)}; \lambda);\quad r=1,\dots, R.
\end{align}
Here, the operator ``$\circ$'' represents an entry-wise product, parameters $\widetilde{\bfbeta}\in \mathbb{R}^{P \times 1}$ and $\{\widetilde{\bfalpha}^{(r)}\in \mathbb{R}^{P \times 1};  r=1,\dots, R\}$ represent the non-zero effect from the node and network predictors, respectively, and functions $\mathcal{T}(\bfgamma; \lambda)$ and $\mathcal{T}(\bftheta, \bftheta^{(r)}; \lambda)$ induce a distinction of zero and non-zero components in the corresponding effect vector. Based on model \eqref{eq:prior1}, we decompose each coefficient component into its latent non-zero element and a controlling vector function to cartography signal locations. To ensure signals to be spatially contiguous across the brain anatomy as is anticipated in brain imaging applications, we consider the following vector thresholded functions
\begin{equation}
  \begin{aligned}\label{eq:thre}
    \mathcal{T}(\bfgamma; \lambda) &= \Big\{\mathbb{I}(|\gamma_1|>\lambda), \ldots, \mathbb{I}(|\gamma_P|>\lambda)\Big\}\trans, \\
    \mathcal{T}(\bftheta, \bftheta^{(r)}; \lambda) &= \Big\{\mathbb{I}(|\theta_1|>\lambda) \mathbb{I}(|\theta^{(r)}_1|>\lambda), \ldots, \mathbb{I}(|\theta_P|>\lambda) \mathbb{I}(|\theta^{(r)}_P|>\lambda)\Big\}\trans,
  \end{aligned}
\end{equation}
where $\mathbb{I}(\omega)$ is the indicator function with $\mathbb{I}(\omega)=1$ if $\omega$ holds and zero otherwise. Based on model \eqref{eq:thre}, function $\mathcal{T}(\bfgamma; \lambda)$ consists of a latent vector $\bfgamma=(\gamma_1,\dots,\gamma_P)\trans$ with each of its elements continuous over the domain, and a non-negative threshold parameter $\lambda$ to control the degree of sparsity for $\bfbeta$. Function $\mathcal{T}(\bftheta, \bftheta^{(r)}; \lambda)$ contains two latent continuous vectors $\bftheta=(\theta_1,\dots,\theta_P)\trans$ and $\bftheta^{(r)}=(\theta^{(r)}_1,\dots,\theta^{(r)}_P)\trans$ and the same threshold parameter $\lambda$. Specifically,  $\bftheta$ serves as a generic latent component by imposing a global sparsity for the overall graphic architecture across all the $R$ sub-networks; and $\bftheta^{(r)}$ serves as the individualized one to capture the unique signaling network patterns deviating from the global one. Eventually, the sparsity of $\bfalpha^{(r)}$ is determined simultaneously by both global and individualized latent components. Such a prior construction for $\{\bfalpha^{(r)}\in \mathbb{R}^{P \times 1};  r=1,\dots, R\}$  is in a similar spirit to the sparse group selection literature \citep{chen2016bayesian,simon2013sparse} to impose sparsity in a hierarchical way to accommodate individual signals and their joint interplay from a group perspective, which has shown a great power for the structured variable selection.

Another unique advantage of models \eqref{eq:prior1} and \eqref{eq:thre} is that we could explicitly delineate the correlation between the informative neuromarkers within regional and connectivity measurements. Under the expectation that signal locations captured by brain regional metric and those captured by the networks linked with specific regions should have a coupling effect, we impose the following joint Gaussian prior distribution for $\bfgamma$ and the global network component $\bftheta$ as
\begin{align}
  \begin{pmatrix}
    \bfgamma \\
    \bftheta \\
  \end{pmatrix} \sim
  \mbox{N} \left\{\begin{pmatrix} \boldsymbol{0}_P \\
    \boldsymbol{0}_P\\
    \end{pmatrix},
      \begin{pmatrix}
        \bSigma_{\gamma} & \bLambda^T \\
        \bLambda &\bSigma_{\theta}
      \end{pmatrix}
    \right\},
    \label{eq:corr}
\end{align}
Another critical modeling setup to induce the desired coupling effect is to have a shared thresholded parameter $\lambda$ for functions $\mathcal{T}(\bfgamma; \lambda)$ and $\mathcal{T}(\bftheta, \bftheta^{(r)}; \lambda)$. This ensures that individual $\mathbb{I}(|\gamma_p|>\lambda)$ and $\mathbb{I}(|\theta_p|>\lambda)$ are more likely to yield the same result when $\gamma_p$ and $\theta_p$ are correlated.
In terms of matrices $\bSigma_{\gamma}$ and $\bSigma_{\theta}$, we could leave them unstructured and assign them Inverse Wishart priors, or adopt diagonal matrices. In our numerical studies, we decide to assume the covariance matrices to be squared exponential with variance parameters $\sigma_{\gamma}$ and $\sigma_{g}$, respectively, and a shared lengthscale parameter $\nu$ to mimic the Gaussian process \citep{kang2018scalar}.
This could better promote the spatial contiguity of the signals.
In terms of individualized network component $\bftheta^{(r)}$, we assume $\bftheta^{(r)}\sim \mbox{N}(\bftheta, \sigma_{\theta}\bfI_{P})$ for $r=1,\dots, R$, where the global component serves as the mean parameter to align with the overall effect.

For the non-zero effect parameters in model \eqref{eq:prior1},  we assign $\widetilde{\bfbeta}\sim \mbox{N}(\bfgamma,\sigma_{\beta}\bfI_{P})$ and $\widetilde{\bfalpha}^{(r)}\sim \mbox{N}(\bftheta,\sigma_{\alpha}\bfI_{P}), r=1,\dots, R$.  We choose their mean parameters in this way so that the magnitude of the non-zero effect could be close to the latent continuous components to be thresholded. This could ensure that the absolute values of effect size $\bfbeta$ and $\{\bfalpha^{(r)}; r=1,\dots, R\}$ corresponding to the selected signals are large enough to reflect a sufficient impact.
Finally, for the remaining nuisance and hyper-parameters, we set $\bleta \sim \mbox{N}(\0,\sigma_{\eta}\bfI_Q)$ with $\sigma_{\eta}$ pre-specified to a large value, $\lambda$ a non-informative Uniform prior $\mbox{U}(0, \lambda_{max})$ with $\lambda_{max}$ a conservative upper bound,  and assign non-informative Inverse Gamma (IG) priors for the remaining variance parameters. Overall, we name our model \textit{\textbf{J}oint \textbf{N}ode and \textbf{N}etwork \textbf{T}hresholded \textbf{s}election} (JNNTs) to indicate its unique constructions to select both node-level traits and sub-networks in a coherent and topologically plausible framework.
The schematic diagram of JNNTs modeling is summarized in Figure \ref{fig:model}.

\begin{figure}[!htp]
  \centering
  \includegraphics[width=1\linewidth]{./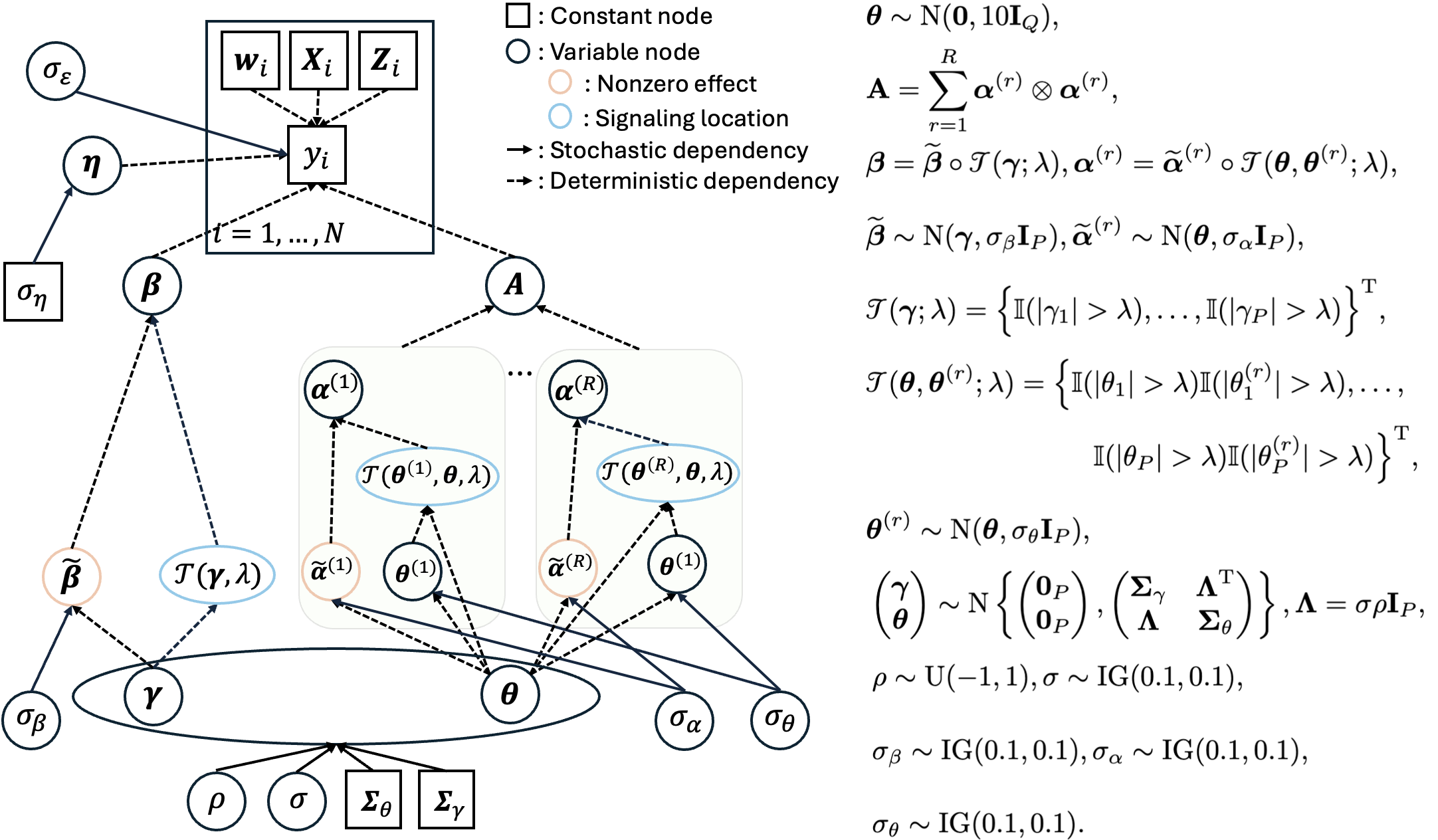}
  \caption{Graphical representation of the JNNTs model.}
  \label{fig:model}
\end{figure}

\section{Posterior Inference} \label{sec:comp}
\subsection{Markov chain Monte Carlo algorithm}
As aforementioned, there are different choices for covariance matrices $\bSigma_{\gamma}$ and $\bSigma_{\theta}$. Here we consider them to be squared exponential to accommodate the spatial smoothness over brain. The posterior algorithm can be easily simplified to the case when covariance matrices are diagonal.  Specifically, we have $\bSigma_{\gamma}=\sigma_{\gamma}\mathbf{O}$ and $\bSigma_{\theta}=\sigma_{g}\mathbf{O}$, where $\mathbf{O}\in \mathbb{R}^{P \times P}$ is a covariance kernel matrix with each of its element $o_{pj}=\exp\{-\|\s_p - \s_j\|_2^2/2\}$, and $\s_p \in \mathbb{R}^{3}$ is the 3D coordinate for node $p$ $(p=1, \ldots, P)$ under the brain registered space. In such specifications, the covariance matrix in the joint distribution \eqref{eq:corr} is positive definite if and only if $|\rho|\sigma_c \le \mbox{min}\big(\{d_p\}_{p=1}^P\big)\sqrt{\sigma_{g}\sigma_{\gamma}}$, where $d_1,\dots,d_p$ are the eigenvalues of $\mathbf{O}$. Given $\sigma_{c}\rho$ are joint together to capture the correlation, we could fix $\sigma_{c}$ while varying $\rho$. Without loss of generality, we assume $\sigma_{\gamma}=\sigma_{g}=\sigma$, and $\sigma_{c} = \sigma/\delta$ with $\delta$ a pre-fixed large number in practice to reduce the number of hyper-parameters. 
We represent the observed data $\mathcal{D}=\{y_i, \w_i, \x_i, \Z_i, i=1,\dots,N\}$.
The posterior likelihood for unknown parameters $\Phi = \big\{\lambda, \bleta, \bgamma, \wtbbeta, \btheta, \{\btheta^{(r)}\}_{r=1}^R, \allowbreak \{\wtbalpha^{(r)}\}_{r=1}^R, \sigma_{\beta}, \sigma_{\alpha}, \sigma_{\theta}, \sigma_{\epsilon}, \sigma, \rho \big\}$ given the data becomes
\begin{align*}
  \bpi(\Phi \mid \mathcal{D}) \propto
  & \bpi(\mathcal{D} \mid \lambda, \bleta, \bgamma, \wtbbeta, \btheta, \{\btheta^{(r)}\}_{r=1}^R, \{\wtbalpha^{(r)}\}_{r=1}^R, \sigma_{\epsilon}) \bpi(\bleta) \bpi(\wtbbeta \mid \bgamma, \sigma_{\beta})  \\
  & \times \prod_{r=1}^R \bpi(\wtbalpha^{(r)} \mid \btheta, \sigma_{\alpha}) \prod_{r=1}^R \bpi(\btheta^{(r)} \mid \btheta, \sigma_{\theta}) \bpi(\bgamma, \btheta \mid \bSigma_{\gamma}, \bSigma_{\theta}, \sigma, \rho) \\
  & \times \bpi(\sigma_{\beta}) \bpi(\sigma_{\alpha})\bpi(\sigma_{\theta})
  \bpi(\sigma_{\epsilon}) \bpi(\sigma) \bpi(\rho) \bpi(\lambda).
\end{align*}
Based on the posterior distribution, we develop a Markov chain Monte Carlo (MCMC) algorithm to conduct the posterior inference with a combination of Gibbs sampler and Metropolis-Hastings (MH) \citep{MH1953} steps .
A brief demonstration of the sampling steps is shown below and a detailed description of the MCMC algorithm is provided in Section B of the Supplementary Materials.

Briefly, during the MCMC, we iteratively
\begin{itemize}
    \item Sample $[\bleta \mid \mathcal{D}, \lambda, \bgamma, \wtbbeta, \btheta, \{\btheta^{(r)}\}_{r=1}^R, \{\wtbalpha^{(r)}\}_{r=1}^R, \sigma_{\epsilon}]$ and $[\wtbbeta \mid \mathcal{D}, \lambda, \bleta, \bgamma, \btheta, \\ \{\btheta^{(r)}\}_{r=1}^R \{\wtbalpha^{(r)}\}_{r=1}^R, \sigma^2_{\epsilon}, \sigma^2_{\beta}]$ from their full conditional Normal distributions, respectively.
    \item Sample $[ \talpha_{p}^{(r)} \mid  \mathcal{D}, \lambda, \bleta, \bgamma, \wtbbeta, \btheta, \{\btheta^{(r)}\}_{r=1}^R, \{\wtbalpha^{(k)}\}_{k \neq r}^P, \{\talpha_j^{(r)}\}_{j \neq p}^P, \sigma_{\epsilon}, \sigma_{\alpha}]$ from the full conditional Normal distributions, for $r=1, \ldots, R, p=1, \ldots, P$.
    \item Sample $[\theta^{(r)}_{p} \mid  \mathcal{D}, \lambda, \bleta, \bgamma, \wtbbeta, \btheta, \{\btheta^{(k)}\}_{k \neq r}, \{\theta^{(r)}_j\}_{j \neq p}, \{\wtbalpha^{(r)}\}_{r=1}^R, \sigma_{\epsilon}, \sigma_{\theta}]$ from a mixed truncated Normal distributions, for $r=1, \ldots, R;  p=1, \ldots, P$.
    \item Sample  $[\gamma_{p} \mid  \mathcal{D}, \lambda, \bleta, \{\gamma_j\}_{j \neq p}^P, \wtbbeta, \btheta, \{\btheta^{(r)}\}_{r=1}^R, \{\wtbalpha^{(r)}\}_{r=1}^R, \sigma_{\epsilon}, \sigma_{\beta}, \sigma, \rho ]$ and $ [\theta_{p} \mid  \mathcal{D}, \lambda, \bleta, \bgamma, \wtbbeta, \{\theta_{j}\}_{j \neq p}, \{\btheta^{(r)}\}_{r=1}^R, \{\wtbalpha^{(r)}\}_{r=1}^R, \sigma_{\epsilon}, \sigma_{\alpha}, \sigma_{\theta}, \sigma, \rho]$ from mixed truncated Normal distributions, respectively, for $p = 1, \ldots, P$.
    \item Sample $[\sigma_{\beta} \mid \wtbbeta, \bgamma]$, $[\sigma_{\alpha} \mid \{\wtbalpha^{(r)}\}_{r=1}^R, \btheta]$, and $[\sigma_{\theta} \mid \{\btheta^{(r)}\}_{r=1}^R, \btheta]$ from the conditional IG distributions, respectively.
    \item Sample $[\sigma_{\epsilon} \mid \mathcal{D}, \lambda, \bleta, \bgamma, \wtbbeta, \btheta, \{\btheta^{(r)}\}_{r=1}^R, \{\wtbalpha^{(r)}\}_{r=1}^R]$ from the conditional IG distribution.
    \item Sample $[\sigma \mid \bgamma, \btheta, \rho]$ from the conditional IG distribution.
    \item Sample the proposed $\rho$ and $\lambda$ based on random walk proposals, then  calculate the acceptance ratios $R_{\rho}$ and $R_{\lambda}$ with probabilities min$(1, R_{\rho})$, min$(1, R_{\lambda})$ to accept the proposal values, respectively.
\end{itemize}
The convergence of the MCMC will be assessed by trace plots and GR method \citep{gelman1992}. Based on the posterior samples, we could determine the informative node-level and network-level features and their associated effects from the posterior means along with uncertainty quantification. Specifically, at each iteration, node $p$ is considered as selected when $\mathbb{I}(|\gamma_p|>\lambda)=1$, and edge $(p,j)$ at sub-network $r$ is selected when  jointly $\mathbb{I}(|\theta_p|>\lambda) \mathbb{I}(|\theta^{(r)}_p|>\lambda)\mathbb{I}(|\theta_j|>\lambda) \mathbb{I}(|\theta^{(r)}_j|>\lambda) =1$.
After calculating the marginal posterior  probabilities (MPP) of those metrics over all the iterations after burn-in, a cutoff of 0.5 on MPP will be used to determine whether they are significantly associated with the outcome according to the median
probability model \citep{hastie2004}. Of note, the current way to summarize the sub-networks is different from calculating the MPP for each individual edge and assembling the ones with MPPs larger than the cutoff, given that the latter one does not reflect the inclusion probability of a network configuration as a whole.
\section{Simulation} \label{sec:simu}
We conduct simulation studies to assess the finite sample performance of the proposed JNNTs model on both prediction, and node- and network-level feature selections.  We consider a low-dimensional atlas with $P=20$ and a high-dimensional one with $P=100$. Under the low-dimensional case, we consider two different sample sizes with $N=50$ or 200; and we set $N=1000$ for the high-dimensional case. These setups will make the simulated data scales similar to or more challenging than our data application.  In terms of signal patterns, we consider four different scenarios as shown in Figure \ref{fig:simulation-circle}  with the signaling nodes colored in orange and signaling sub-networks displayed. For the first scenario generated for $P=20$, we impose a strong coupling on the signals for node and network components with all the nodes in the two signaling sub-networks also generating  node-level signals. In the second scenario for $P=20$, on the contrary, there is no coupling between the signals from the two components with none of the nodes in the signaling sub-networks providing significant node-level information.
In the third scenario, we maintain the same setting as that in the first scenario but remove two edges from the signaling subgraph colored in blue. This setting allows us to assess the robustness of JNNTs when the model assumption does not hold.
The fourth scenario, which is designed for the high-dimensional brain atlas, is positioned as a mixture of the previous scenarios to mimic a more practical setting with a certain correlation between the regional and network signal sources but they are not fully aligned. Based on the signal patterns under each scenario, we generate the non-zero elements of $\bbeta$ from $\mbox{N}(\bgamma, \I_P)$, and those of $\balpha^{(r)}$ from $\mbox{N}(\btheta, 2\I_P)$ for $r=1, 2$, where $\bgamma$ and $\btheta$ are both generated from $\mbox{N}(\bfzero, 2\I_P)$. We then set $\bfeta=1$, and generate $\x_i$ and the off-diagonal elements of $\bfZ_i$ from $\mbox{N}(0, 1)$. Finally, we consider two noise levels with $\sigma_{\epsilon}=2$ for a low-noise setting and 6 for a high-noise one. In total, we consider 14 different simulated settings.

\begin{figure}[!htbp]
  \centering
  \begin{subfigure}{0.48\textwidth}
      \centering
      \includegraphics[width=1\linewidth]{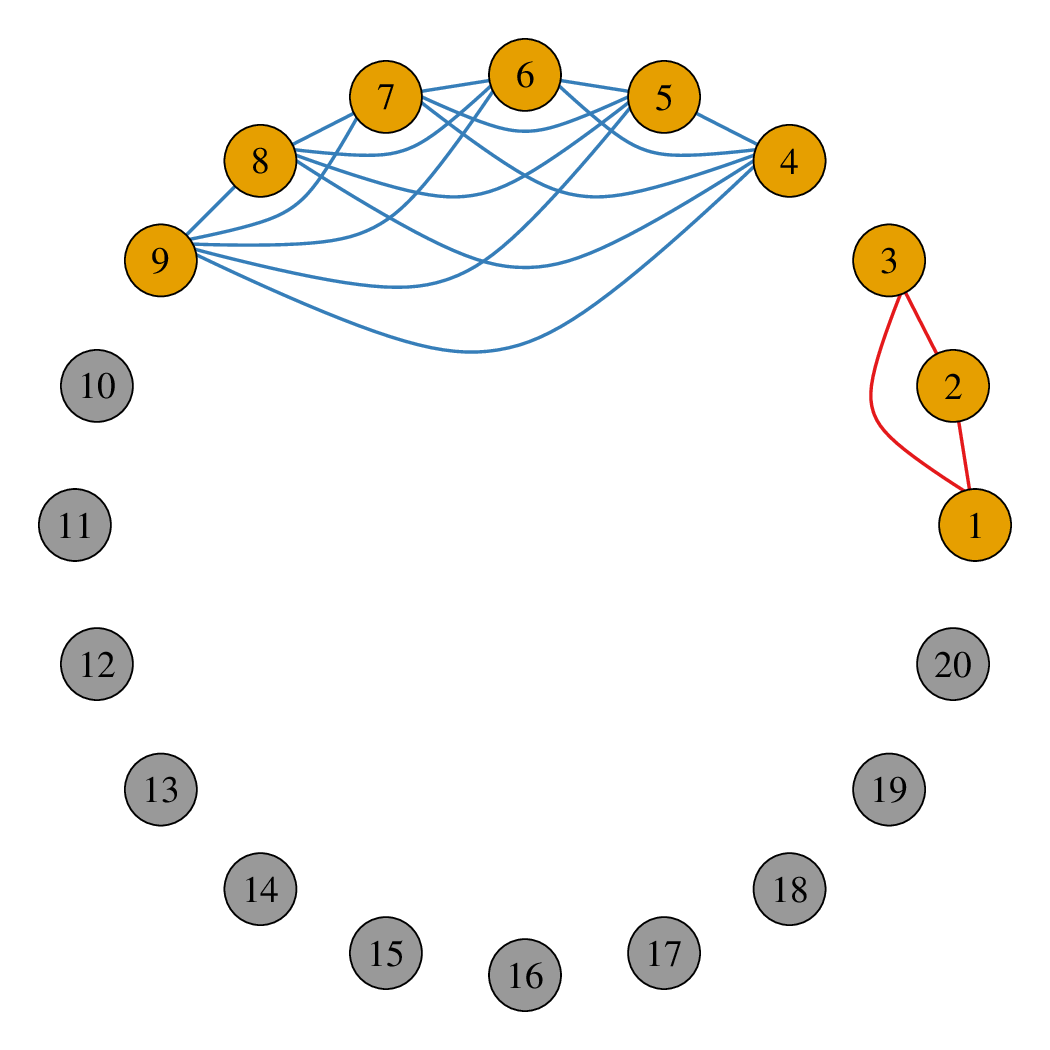}
      \caption{Scenario 1.}
      \label{fig:setting1}
  \end{subfigure}
  \hspace*{0.1cm}
  \begin{subfigure}{0.48\textwidth}
      \centering
      \includegraphics[width=1\linewidth]{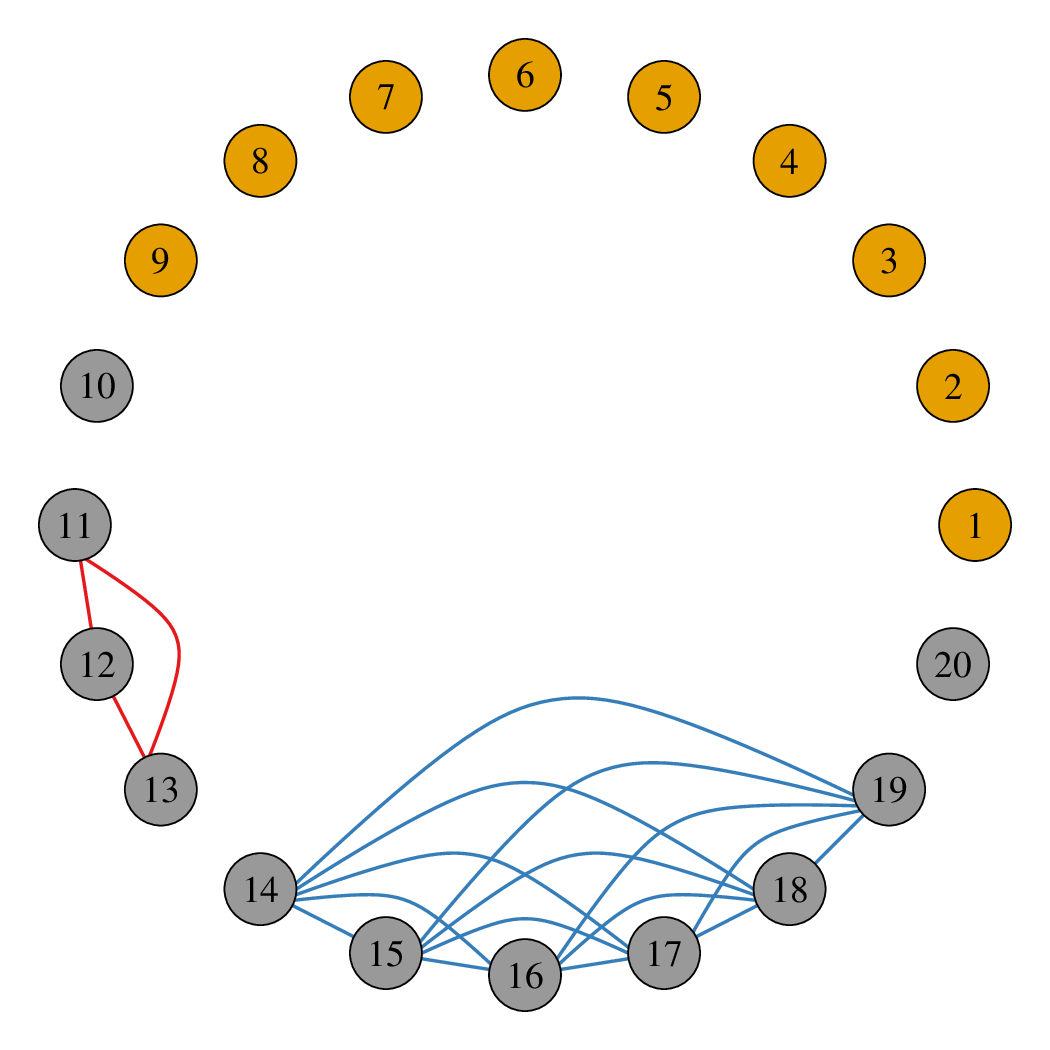}
      \caption{Scenario 2.}
      \label{fig:setting2}
  \end{subfigure}
  \begin{subfigure}{0.48\textwidth}
      \centering
      \includegraphics[width=1\linewidth]{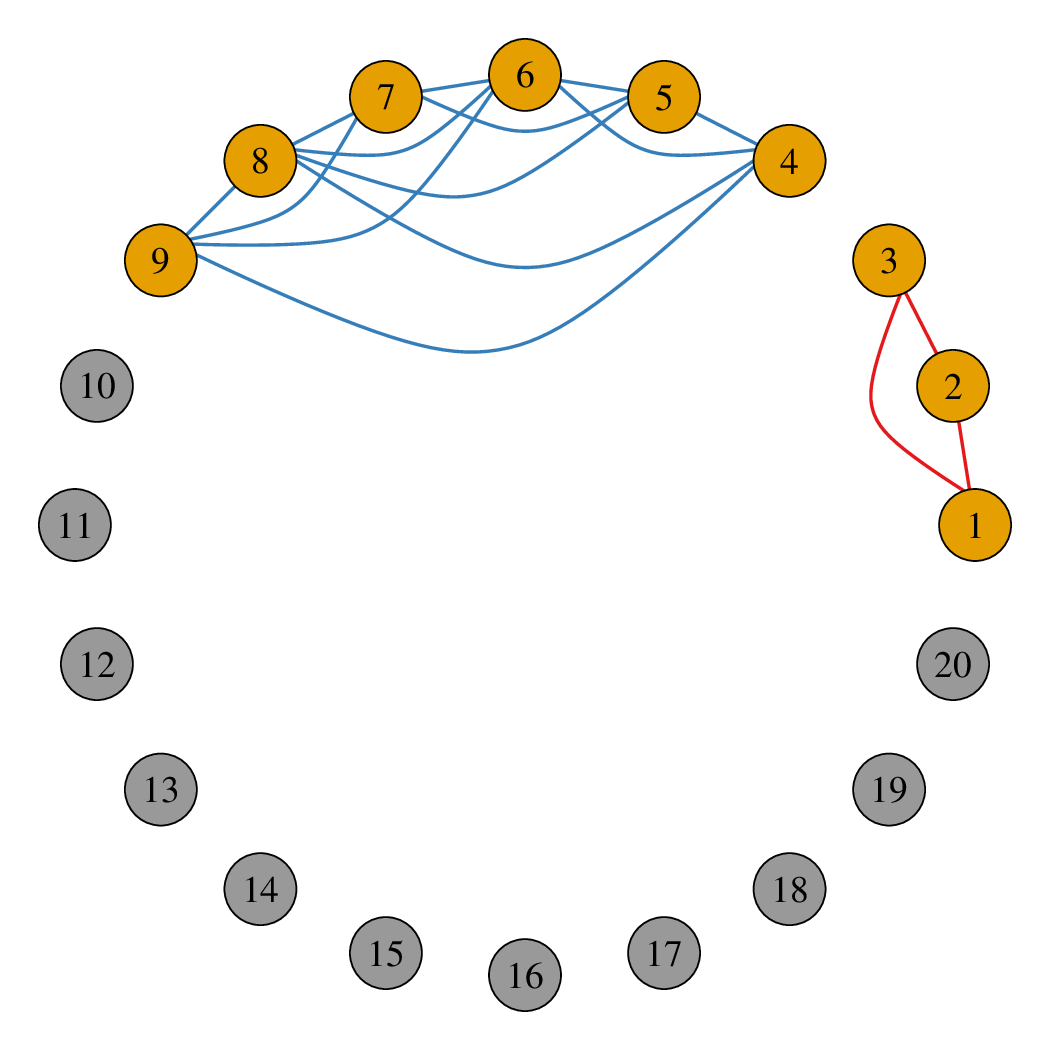}
      \caption{Scenario 3.}
      \label{fig:setting3}
  \end{subfigure}
  \hspace*{0.1cm}
  \begin{subfigure}{0.48\textwidth}
      \centering
      \includegraphics[width=1\linewidth]{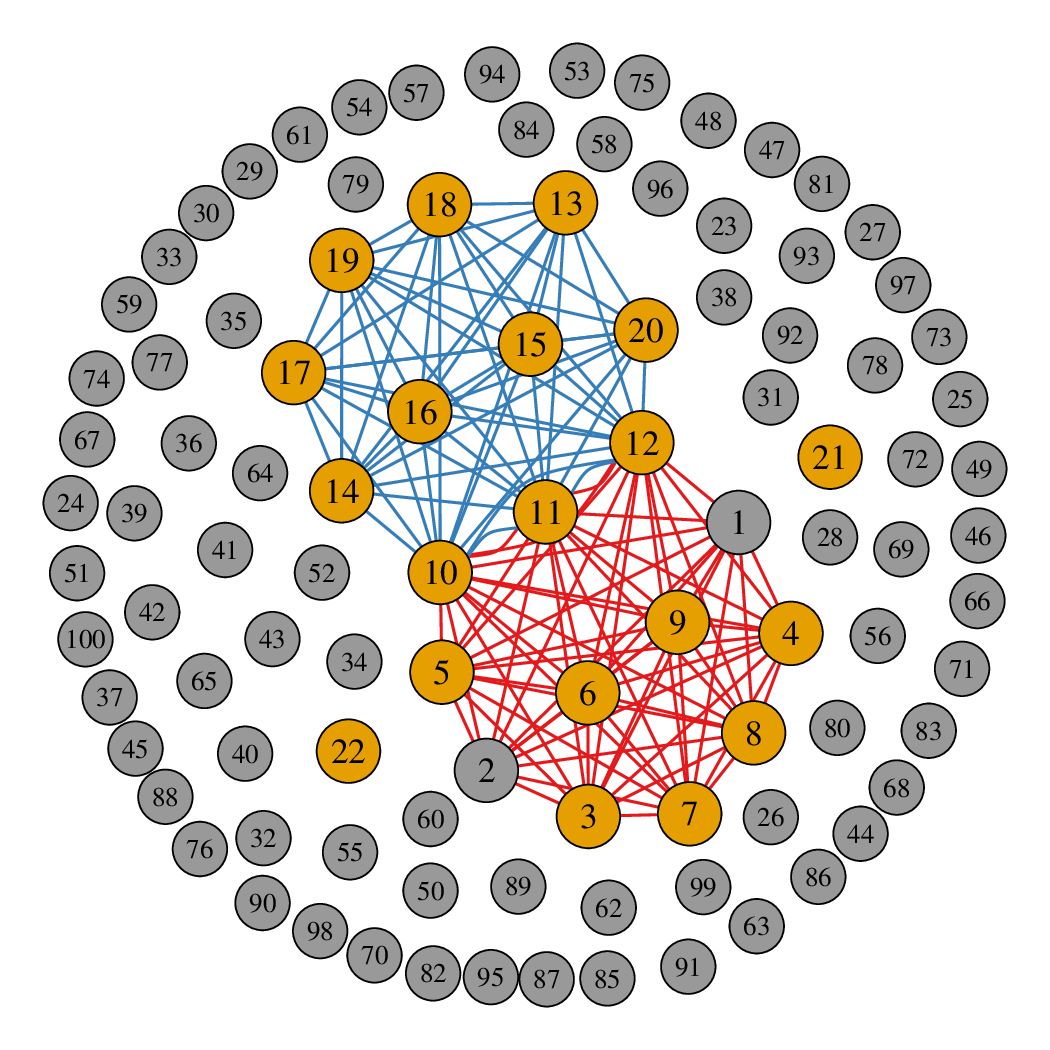}
      \caption{Scenario 4.}
      \label{fig:setting4}
  \end{subfigure}
  \caption{Illustration figures for simulation setups.
  In all figures, nodes colored in orange and grey are assigned with non-zero and zero coefficients, respectively. Sub-networks that are relevant to the prediction of the outcome are indicated with different edge colors.}
  \label{fig:simulation-circle}
\end{figure}

For each simulated dataset, to determine the tuning parameter and evaluate the out-of-sample prediction, we randomly split 10\% of its samples to form a validation set and generate an
independent testing set of size 100. We implement the proposed JNNTs model  with $\lambda_{max}=1.5$, $\sigma_{\eta}=10$, and IG(0.1, 0.1) hyper-prior for the remaining variance parameters. We set the candidate set of $R$ to be $\{2, 3, 4, 5\}$ with the optimal value determined by $R^2$ under the validation set. Based on random initials, each MCMC contains 10,000 iterations with 5,000 burn-in.
Besides JNNTs, we also implement several competing methods. Given that there is no existing method that can simultaneously integrate vector- and matrix-variate predictors, we consider Lasso \citep{LASSO} and Horseshoe \citep{Horseshoe2009} as natural choices to perform high-dimensional regression under a frequentist and Bayesian paradigm. To apply them to our problem, we extract the upper diagonal elements from the network matrix-variate as a vector, and stack this vector with the node-level features as predictors. We also consider a low-rank tensor regression model \citep{zhou2013tensor}, denoted as Tensor, which allows only matrix-variate predictors. Since their estimated coefficient matrix is not symmetric, we sum the matrix with its transpose and divide by 2 as the final estimation.
Besides these existing methods, we further consider two variations of JNNTs by including only the node-level features and network-level features denoted as JNNTs-node and JNNTs-network, respectively. This allows us to evaluate the model performance when either of the components fails to be included in our predictive mechanism.
Similar to JNNTs, for competing methods involving tuning parameters, their optimal values are also determined under the validation set by $R^2$.
Eventually, we summarize the feature selection accuracy by sensitivity (Sens) and specificity (Spec) for both node- and network-level features, and assess predictive performance by $R^2$ under testing set.
Of note, since Horseshoe does not explicitly impose sparsity,
we implement a signal adaptive variable selector (SAVS) \citep{Ray2018} to determine the selected feature via soft-thresholding. 
We also try to impose sparsity by using an optimal cutoff selected by the validation set on the absolute value of the coefficients. Since both approaches lead to similar performance, we only reported the results for Horseshoe based on SAVS. The results for all the methods summarized across simulated settings are shown in Tables \ref{tbl:S12} -- \ref{tbl:S34}.
\begin{table}[htp]
    \centering
    \resizebox{\textwidth}{!}{
    \begin{tabular}{lcclccccc}
        \toprule
        Scenario  & $N$ & Noise & Method & $R^2$ & \multicolumn{2}{c}{Region} & \multicolumn{2}{c}{Network} \\
        \cline{6-9}
        \multicolumn{5}{c}{} & Sens (\%) & Spec (\%) & Sens (\%) & Spec (\%) \\
        \toprule
        \midrule
        \multirow{25}*{Scenario 1} & \multirow{12.5}*{50} &
        \multirow{6}*{Low}
        & JNNTs         & 0.950(0.082) & 0.819(0.121) & 0.936(0.090) & 0.991(0.047) & 0.992(0.016) \\
        & & & Lasso         & 0.506(0.190) & 0.326(0.155) & 0.881(0.119) & 0.483(0.155) & 0.886(0.054) \\
        & & & Horseshoeshoe         & 0.687(0.154) & 0.223(0.102) & 0.985(0.039) & 0.421(0.111) & 0.987(0.010) \\
        & & & JNNTs-node    & 0.041(0.033) & 0.534(0.286) & 0.614(0.320) &     --       &     --       \\
        & & & JNNTs-network & 0.704(0.150) &      --      &     --       & 0.658(0.148) & 0.980(0.032) \\
        & & & Tensor & 0.426(0.201) &     --       &     --      & 0.721(0.237) & 0.598(0.285) \\
        \cmidrule{3-9}
        & & \multirow{6}*{High}
        & JNNTs         & 0.629(0.172) & 0.463(0.210) & 0.845(0.156) & 0.633(0.210) & 0.965(0.064) \\
        & & & Lasso         & 0.347(0.146) & 0.250(0.147) & 0.915(0.104) & 0.381(0.162) & 0.919(0.057) \\
        & & & Horseshoe         & 0.406(0.144) & 0.122(0.061) & 0.975(0.047) & 0.280(0.091) & 0.985(0.011) \\
        & & & JNNTs-node    & 0.043(0.031) & 0.496(0.276) & 0.617(0.326) &      --      &      --      \\
        & & & JNNTs-network & 0.478(0.175) &     --       &     --       & 0.437(0.185) & 0.977(0.034) \\
        & & & Tensor & 0.336(0.165) &     --       &     --      & 0.694(0.203) & 0.602(0.241) \\
        \cmidrule{2-9}
        & \multirow{12.5}*{200} & \multirow{6}*{Low}
        & JNNTs         & 0.979(0.028) & 0.988(0.034) & 0.990(0.031) & 1.000(0.000) & 0.997(0.008) \\
        & & & Lasso         & 0.978(0.004) & 0.999(0.011) & 0.732(0.136) & 0.989(0.024) & 0.719(0.064) \\
        & & & Horseshoe         & 0.982(0.004) & 0.995(0.022) & 0.931(0.075) & 0.977(0.031) & 0.926(0.024) \\
        & & & JNNTs-node    & 0.064(0.032) & 0.446(0.212) & 0.832(0.169) &     --       &      --      \\
        & & & JNNTs-network & 0.804(0.141) &     --       &     --       & 0.966(0.107) & 0.991(0.017) \\
        & & & Tensor & 0.771(0.068) &     --       &     --      & 0.998(0.011) & 0.229(0.226) \\
        \cmidrule{3-9}
        & & \multirow{6}*{High}
        & JNNTs         & 0.874(0.023) & 0.799(0.099) & 0.958(0.064) & 1.000(0.000) & 0.995(0.013) \\
        & & & Lasso         & 0.834(0.026) & 0.844(0.118) & 0.763(0.147) & 0.873(0.066) & 0.774(0.075) \\
        & & & Horseshoe         & 0.848(0.026) & 0.779(0.105) & 0.928(0.078) & 0.828(0.068) & 0.919(0.020)  \\
        & & & JNNTs-node    & 0.061(0.034) & 0.491(0.200) & 0.786(0.190) &    --        &      --      \\
        & & & JNNTs-network & 0.742(0.075) &     --       &     --       & 0.919(0.131) & 0.987(0.019) \\
        & & & Tensor & 0.672(0.075) &     --       &     --      & 0.998(0.013) & 0.158(0.155) \\
        \midrule
        \midrule
        \multirow{25}*{Scenario 2} & \multirow{12.5}*{50} & \multirow{6}*{Low}
        & JNNTs          & 0.930(0.139) & 0.765(0.150) & 0.931(0.094) & 0.981(0.069) & 0.988(0.022) \\
        & & & Lasso          & 0.549(0.168) & 0.278(0.153) & 0.900(0.100) & 0.482(0.144) & 0.892(0.059) \\
        & & & Horseshoe          & 0.690(0.128) & 0.176(0.082) & 0.987(0.037) & 0.432(0.091) & 0.986(0.010) \\
        & & & JNNTs-node     & 0.041(0.036) & 0.577(0.324) & 0.520(0.373) &      --      &      --      \\
        & & & NJNNTs-network & 0.737(0.099) &    --        &      --      & 0.711(0.225) & 0.980(0.020) \\
        & & & Tensor & 0.444(0.231) &     --       &     --      & 0.753(0.218) & 0.579(0.236) \\
        \cmidrule{3-9}
        & & \multirow{6}*{High}
        & JNNTs         & 0.586(0.219) & 0.415(0.203) & 0.783(0.176) & 0.632(0.240) & 0.962(0.035) \\
        & & & Lasso         & 0.395(0.144) & 0.241(0.147) & 0.912(0.091) & 0.400(0.154) & 0.906(0.061) \\
        & & & Horseshoe         & 0.410(0.142) & 0.148(0.067) & 0.988(0.031) & 0.296(0.081) & 0.982(0.013) \\
        & & & JNNTs-node    & 0.038(0.029) & 0.542(0.316) & 0.592(0.362) &     --       &      --       \\
        & & & JNNTs-network & 0.521(0.167) &    --        &      --      & 0.535(0.277) & 0.967(0.045) \\
        & & & Tensor & 0.322(0.157) &     --       &     --      & 0.703(0.205) & 0.617(0.246) \\
        \cmidrule{2-9}
        & \multirow{12.5}*{200} & \multirow{6}*{Low}
        & JNNTs         & 0.976(0.047) & 0.987(0.036) & 0.985(0.034) & 1.000(0.000) & 0.998(0.006) \\
        & & & Lasso         & 0.978(0.004) & 0.993(0.027) & 0.724(0.128) & 0.991(0.020) & 0.723(0.057) \\
        & & & Horseshoe         & 0.983(0.003) & 0.988(0.035) & 0.933(0.075) & 0.978(0.032) & 0.930(0.023) \\
        & & & JNNTs-node    & 0.062(0.034) & 0.450(0.210) & 0.851(0.148) &     --       &     --       \\
        & & & JNNTs-network & 0.803(0.126) &    --        &    --        & 0.973(0.091) & 0.990(0.019) \\
        & & & Tensor &  0.764(0.059) &     --       &     --      & 1.000(0.000) & 0.194(0.209) \\
        \cmidrule{3-9}
        & & \multirow{6}*{High}
        & JNNTs         & 0.872(0.026) & 0.789(0.112) & 0.936(0.069) & 0.993(0.044) & 0.994(0.012) \\
        & & & Lasso         & 0.834(0.028) & 0.832(0.113) & 0.755(0.141) & 0.879(0.063) & 0.767(0.070) \\
        & & & Horseshoe         & 0.853(0.023) & 0.790(0.097) & 0.923(0.086) & 0.819(0.069) & 0.917(0.018) \\
        & & & JNNTs-node    & 0.067(0.035) & 0.463(0.212) & 0.803(0.145) &     --       &     --       \\
        & & & JNNTs-network & 0.735(0.099) &     --       &     --       & 0.916(0.128) & 0.989(0.018) \\
        & & & Tensor &  0.663(0.080) &     --       &     --      & 0.998(0.010) & 0.159(0.153) \\
        \midrule
        \bottomrule
    \end{tabular}}
    \caption{Simulation results for Scenario 1 and Scenario 2. Reported are the average values over 100 simulation runs, with the standard deviations in parentheses.}
    \label{tbl:S12}
\end{table}
\begin{table}[htp]
    \centering
    \resizebox{\textwidth}{!}{
    \begin{tabular}{lcclccccc}
        \toprule
        Scenario  & $N$ & Noise & Method & $R^2$ & \multicolumn{2}{c}{Region} & \multicolumn{2}{c}{Network} \\
        \cline{6-9}
        \multicolumn{5}{c}{} & Sens (\%) & Spec (\%) & Sens (\%) & Spec (\%) \\
        \toprule
        \midrule
        \multirow{25}*{Scenario 3} & \multirow{12.5}*{50} &
        \multirow{6}*{Low}
        & JNNTs             & 0.906(0.118) & 0.899(0.158) & 0.984(0.052) & 0.942(0.119) & 0.979(0.018) \\
        & & & Lasso         & 0.572(0.172) & 0.310(0.183) & 0.842(0.195) & 0.333(0.115) & 0.862(0.101) \\
        & & & Horseshoeshoe & 0.700(0.168) & 0.405(0.150) & 0.877(0.102) & 0.665(0.104) & 0.876(0.040) \\
        & & & JNNTs-node    & 0.048(0.040) & 0.557(0.319) & 0.542(0.360) &     --       &     --       \\
        & & & JNNTs-network & 0.681(0.168) &      --      &     --       & 0.666(0.191) & 0.976(0.030) \\
        & & & Tensor        & 0.439(0.212) &     --       &     --       & 0.762(0.195) & 0.573(0.252) \\
        \cmidrule{3-9}
        & & \multirow{6}*{High}
        & JNNTs         & 0.541(0.228) & 0.553(0.205) & 0.941(0.095) & 0.659(0.170) & 0.953(0.040) \\
        & & & Lasso         & 0.372(0.148) & 0.202(0.141) & 0.789(0.319) & 0.255(0.116) & 0.880(0.107) \\
        & & & Horseshoe         &  0.400(0.168) & 0.393(0.147) & 0.775(0.142) & 0.280(0.091) & 0.985(0.011) \\
        & & & JNNTs-node    & 0.041(0.037) & 0.529(0.319) & 0.579(0.356) &      --      &      --      \\
        & & & JNNTs-network & 0.411(0.183) &     --       &     --       & 0.429(0.210) & 0.979(0.028) \\
        & & & Tensor & 0.290(0.172) &     --       &     --       & 0.741(0.196) & 0.525(0.238) \\
        \cmidrule{2-9}
        & \multirow{12.5}*{200} & \multirow{6}*{Low}
        & JNNTs         & 0.956(0.084) & 0.997(0.018) & 0.998(0.015) & 0.988(0.042) & 0.988(0.008) \\
        & & & Lasso         & 0.979(0.004) & 0.991(0.030) & 0.733(0.152) & 0.616(0.098) & 0.704(0.060) \\
        & & & Horseshoe         & 0.982(0.004) & 0.995(0.022) & 0.931(0.075) & 0.977(0.031) & 0.926(0.024) \\
        & & & JNNTs-node    & 0.069(0.044) & 0.499(0.204) & 0.810(0.175) &     --       &      --      \\
        & & & JNNTs-network & 0.799(0.146) &     --       &     --       & 0.958(0.103) & 0.983(0.011) \\
        & & & Tensor & 0.770(0.067) &     --       &     --       & 0.998(0.014) & 0.229(0.214) \\
        \cmidrule{3-9}
        & & \multirow{6}*{High}
        & JNNTs         & 0.854(0.060) & 0.931(0.104) & 0.992(0.029) & 0.979(0.054) & 0.985(0.011) \\
        & & & Lasso         & 0.831(0.030) & 0.849(0.108) & 0.775(0.150) & 0.527(0.095) & 0.745(0.077) \\
        & & & Horseshoe         & 0.848(0.026) & 0.779(0.105) & 0.928(0.078) & 0.828(0.068) & 0.919(0.020)  \\
        & & & JNNTs-node    & 0.062(0.040) & 0.476(0.205) & 0.812(0.167) &    --        &      --      \\
        & & & JNNTs-network & 0.703(0.143) &     --       &     --       & 0.899(0.129) & 0.981(0.017) \\
        & & & Tensor & 0.671(0.075) &     --       &     --       & 0.995(0.019) & 0.163(0.171) \\
        \midrule
        \midrule
        \multirow{12.5}*{Scenario 4} & \multirow{12.5}*{1000} & \multirow{6}*{Low}
        & JNNTs         & 0.971(0.031) & 1.000(0.000) & 0.996(0.007) & 1.000(0.000) & 0.999(0.004) \\
        & & & Lasso         & 0.931(0.011) & 0.999(0.005) & 0.882(0.042) & 0.988(0.008) & 0.882(0.014) \\
        & & & Horseshoe         & 0.970(0.005) & 1.000(0.000) & 0.998(0.005) & 0.981(0.009) & 0.998(0.001) \\
        & & & JNNTs-node    & 0.022(0.014) & 0.192(0.127) & 0.999(0.002) &      --      &     --       \\
        & & & JNNTs-network & 0.919(0.014) &    --        &     --      & 1.000(0.001) & 1.000(0.000) \\
        & & & Tensor &  0.737(0.174) &     --       &     --      & 0.997(0.007) & 0.420(0.281) \\
        \cmidrule{3-9}
        & & \multirow{6}*{High}
        & JNNTs         & 0.797(0.042) & 0.835(0.102) & 0.984(0.015) & 1.000(0.000) & 0.999(0.001) \\
        & & & Lasso         & 0.581(0.047) & 0.682(0.112) & 0.923(0.030) & 0.801(0.033) & 0.923(0.011) \\
        & & & Horseshoe         & 0.657(0.050) & 0.568(0.104) & 0.984(0.014) & 0.739(0.035) & 0.985(0.002) \\
        & & & JNNTs-node    & 0.016(0.015) & 0.180(0.118) & 0.999(0.001) &      --      &     --       \\
        & & & JNNTs-network & 0.713(0.160) &     --       &     --       & 0.983(0.078) & 0.999(0.002) \\
        & & & Tensor &  0.596(0.167) &     --       &     --      & 0.997(0.010) & 0.195(0.089) \\
        \midrule
        \bottomrule
    \end{tabular}}
    \caption{Simulation results for Scenario 3 and Scenario 4. The layout is the same as in Table \ref{tbl:S12}.}
    \label{tbl:S34}
\end{table}

Based on the results, we conclude that JNNTs outperforms the other methods in out-of-sample prediction and selection for both types of features in all the simulated settings. This confirms the importance of jointly considering the two feature components and accommodating the topological structure of network-variate. Particularly, under the most challenging settings with  small sample sizes and large noise levels, JNNTs can still provide a satisfactory performance shown by the large $R^2$, sensitivities and specificities, which ensures its general use in real practice. When comparing Scenarios 1 and 2, we notice the performance of JNNTs is quite similar between the two scenarios. Despite assuming the coupling structure, JNNTs can uncover the effects equally well when the two components are independent. A further comparison between Scenarios 1 and 3 shows that JNNT maintains a satisfactory performance with mis-specified sub-network architectures. These indicate the robustness of our method with respect to the coupling effect and network figure configurations.
Under Setting 4 with a data scale close to our ABCD application, JNNTs remains a strong predictive power and high feature selection accuracy despite the noise level. Finally, among the competing methods, JNNTs-network achieves the best prediction and network feature selection results almost in all the settings. This further emphasizes the importance of accommodating network configuration for connectivity predictors. For JNNTs-node, despite a relatively robust performance of feature selection under small sample sizes, it suffers from poor predictive performance without a contribution of network components. Lasso and Horseshoe show a similar pattern under most of the settings with a satisfactory performance under large samples but a significant deterioration in predicting new samples and identifying the true signals with sample size decreased.
Tensor on the other hand suffers from the over-selection issue and worse predictive accuracy compared with JNNTs-network.

We also perform an extra simulation study to better mimic the neuroimaging study, denoted as Setting 2. In this setting, the observed brain regional and network measurements from the ABCD study is directly used as the design matrix to generate data, with $P=264$. To demonstrate the performance of JNNTs on different data size, we set $N=200$ and $N=500$ and randomly sampling without replace from the original ABCD study to construct $\X$ and $\Z$.
As to the phenotype,
we first set the signal patterns
where there are 30 nodes and two sub-networks associated with the phenotype, and then generate coefficient vectors as described in simulation Scenario 1--5. In this setting, we also consider two noise levels. The results are summarized in Table \ref{tbl:S_real}.
Due to relative large feature dimension and small sample size,
Tensor method is not applicable when $N=200$, and can only handle rank 1 fit with rank 1 tensor coefficient, i.e., $A = \balpha^{(1)} \otimes \balpha^{(1)}$, when $N=500$.
However, the underlying coefficient matrix has rank 2 and hence tensor method has much deteriorate performance.
All methods deteriorates when noise level increases and sample size decreases.
Due to the correlation between regional measurements, all methods have lower Sensitivity in regional selection comparing to Scenario 1 -- 4.
However, JNNTs outperforms all the other competing methods substantially, in terms of both prediction accuracy and Sensitivity, especially for the smaller datasets.
\begin{table}[htp]
  \centering
  \resizebox{\textwidth}{!}{
  \begin{tabular}{lcllcccc}
      \toprule
      $N$ & Noise & Method & $R^2$ & \multicolumn{2}{c}{Region} & \multicolumn{2}{c}{Network} \\
      \cline{4-8}
      \multicolumn{4}{c}{} & Sens (\%) & Spec (\%) & Sens (\%) & Spec (\%) \\
      \toprule
      \midrule
      \multirow{12.5}*{200} &
      \multirow{6}*{Low}
      & 
  JNNTs         & 0.799(0.061) & 0.542(0.093) & 0.990(0.007) & 0.856(0.082) & 0.998(0.003) \\
      & & 
  Lasso         & 0.229(0.081) & 0.160(0.035) & 0.966(0.008) & 0.157(0.023) & 0.998(0.001) \\
      & & 
  Horseshoe     & 0.113(0.182) & 0.110(0.054) & 0.919(0.029) & 0.197(0.033) & 0.986(0.002) \\
      & & 
  JNNTs-node    & 0.104(0.032) & 0.246(0.212) & 0.669(0.229) &     --       &      --      \\
      & & 
  JNNTs-network & 0.628(0.141) &     --       &     --       & 0.642(0.107) & 0.891(0.017) \\
      & & 
  Tensor & -- & -- & -- & -- & -- \\ 
      \cmidrule{2-8}
      & \multirow{6}*{High}
      & 
  JNNTs         & 0.647(0.095) & 0.425(0.067) & 0.923(0.083) & 0.802(0.091) & 0.920(0.012)) \\
      & & 
  Lasso         & 0.116(0.073) & 0.052(0.033) & 0.917(0.009) & 0.090(0.028) & 0.941(0.021) \\
      & & 
  Horseshoe     & 0.069(0.200) & 0.095(0.069) & 0.889(0.030) & 0.077(0.030) & 0.921(0.052)  \\
      & & 
  JNNTs-node    & 0.083(0.034) & 0.143(0.186) & 0.794(0.190) &    --        &      --      \\
      & & 
  JNNTs-network & 0.495(0.205) &     --       &     --       & 0.603(0.171) & 0.827(0.780) \\
      & & 
  Tensor & -- & -- & -- & -- & -- \\
  \midrule \midrule
      \multirow{12.5}*{500} & \multirow{6}*{Low}
      & 
  JNNTs         & 0.936(0.047) & 0.761(0.054) & 0.995(0.004) & 0.894(0.024) & 1.000(0.000) \\
      & & 
  Lasso         & 0.547(0.070) & 0.253(0.060) & 0.939(0.019) & 0.341(0.037) & 0.991(0.003) \\
      & & 
  Horseshoeshoe & 0.616(0.124) & 0.305(0.069) & 0.931(0.026) & 0.378(0.034) & 0.981(0.003) \\
      & & 
  JNNTs-node    & 0.196(0.051) & 0.381(0.104) & 0.799(0.137) &     --       &     --       \\
      & & 
  JNNTs-network & 0.704(0.150) &      --      &     --       & 0.658(0.148) & 0.980(0.032) \\
      & & 
   Tensor       & 0.102(0.201) &     --       &     --       & 0.421(0.203) & 0.398(0.241) \\
      \cmidrule{2-8}
      & \multirow{6}*{High}
      & 
  JNNTs         & 0.814(0.072) & 0.628(0.069) & 0.986(0.009) & 0.806(0.119) & 0.965(0.064) \\
      & & 
  Lasso         & 0.427(0.068) & 0.196(0.056) & 0.842(0.018) & 0.254(0.034) & 0.892(0.003)) \\
      & & 
  Horseshoe    &  0.506(0.122) & 0.283(0.067) & 0.901(0.018) & 0.301(0.046) & 0.928(0.003) \\
      & & 
  JNNTs-node    & 0.132(0.031) & 0.296(0.176) & 0.718(0.167) &      --      &      --      \\
      & & 
  JNNTs-network & 0.578(0.175) &     --       &     --       & 0.607(0.185) & 0.947(0.034) \\
      & & 
  Tensor        & 0.083(0.245) &     --       &     --      & 0.394(0.237) & 0.302(0.285) \\
      \midrule
      \bottomrule
  \end{tabular}
  }
  \caption{Simulation results for Setting 2 that mimics real data. The layout is the same as in Table \ref{tbl:S12}.}
  \label{tbl:S_real}
\end{table}

\section{Data application on the ABCD study} \label{sec::app}
We implement our JNNTs model to a recent large-scale children's brain developmental dataset to study the brain-to-behavior predictive mechanism and identify regional and network neuromarkers.
The ABCD 
study is an ongoing landmark study aiming to comprehensively characterize the biological and environmental factors for human brain development from childhood through young adulthood \citep{garavan2018recruiting}. This unprecedented cohort comprises over 10,000 early elementary school children from 21 recruitment sites across the United States, 
and there is a plan to follow up with them longitudinally for ten years to investigate their brain and mental development. In this work, we focus on the baseline data which contains extensive assessments of behavioral, psychosocial , and neuroimaging measurements. For the brain imaging data, we consider both resting-state and task-based fMRI released via Fast Track option as of April 2018. The resting-state has been playing an essential role in exhibiting intrinsic brain functional architecture, and serves as the most common resource for constructing brain functional connectivity \citep{lee2013resting}. For the task-based fMRI study, there is an emotional version of the n-back task where participants respond to whether the facial expression stimuli shown on the screen is the same as the one shown two trials earlier (2-back) or the same as  the one shown at the start of the block (0-back) \citep{cohen2016impact}. With the task engaging the neural correlates of working memory and emotional regulation processes, the task contrast between 2-back and 0-back provides important characterizations on an individual's working memory and executive processes. Therefore, it is of great interest to study how behavioral traits are impacted jointly by the resting-state brain networks and regional measurements of task contrast.

The obtained fMRI data have reflected the standard acquisition and preprocessing steps across different sites as described in the ABCD imaging acquisition and preprocessing protocols \citep{hagler2019image, casey2018adolescent}.  Briefly, the preprocessing steps for fMRI data included head motion correction, distortion correction, gradient warping correction, within-scan motion correction, and registration to structural images. Subsequently, all the fMRI images are realigned
and registered to the standard 2$mm$ MNI152 volumetric space, and then parcellated by a 264-node brain atlas \citep{Power2011} with mapping details shown in the Web Appendix C. The task contrast map across all the nodes is obtained directly using FSL's FEAT \citep{jenkinson2012fsl} to serve as the regional measurements. For the resting state, to construct functional connectivity, covariates of no interest including linear, quadratic, and cubic drifts, 24-motion parameters \citep{satterthwaite2013improved}, mean cerebral-spinal fluid signal, mean white matter signal, and overall global signal are first regressed out from the data. A temporal smoothing with a Gaussian filter (approximate cutoff frequency of 0.12 Hz) is performed afterward. Finally, to quantify functional connections, Pearson correlation coefficient between time courses from each pair of nodes is calculated, followed by a Fisher z-transformation to normalize the metric, which provides us with a $264\times 264$ connectivity matrix for each subject. For the phenotypic outcome, we consider the children's general intelligence, also known as the $\mathcal{g}$-factor, which plays a fundamental role in reflecting mental ability. For the non-imaging covariates, we include age, sex, race, parental  marital status, and parental highest education and income. In the end, a total of 1,894 subjects are included in our analyses with complete data.

We first assess the out-of-sample predictive performance. We randomly split the whole dataset ten times under a ratio 8:1:1 for a training set, a validation set, and a testing set, respectively. Following a similar operation procedure as in the simulation, we implement JNNTs in the training set with the number of sub-networks searched from $\{2, 3, 4, 5\}$.
Eventually, the average out-of-sample $R^2$ on the testing sets is 0.37 (0.05), which is highly promising compared with existing empirical results in the literature \citep{Dubois2018,zhao2022bayesian}. As a comparison, we also implement JNNTs-node and JNNTs-network following the same operation procedure and obtain 0.31 (0.04) and 0.23 (0.04), respectively. This result indicates the importance of jointly considering both node- and network-level neuromarkers and suggests that task contrast maps might contribute more significantly to informing general mental abilities.

To identify the signaling node and sub-network neuromarkers, we refit the model under the whole data set with $R=2$ which gives the highest out-of-sample $R^2$ among test sets. Eventually, we identify 30 node-level task contrast features and two sub-networks based on a 0.5 cutoff for the MPP, as shown in Figures \ref{fig:selection-ROI}  and \ref{fig:selection-Net}. A larger number of selections from the node-level features may support our previous results under each individual predictive component that activation contrast signals explain more phenotypic variations than connectivity traits.
We also attempt different kernels for $\bSigma_{\gamma}$ and $\bSigma_{\theta}$, and the corresponding results show quite similar selection for both region and network. See Table 2 
in Section D of the Supplementary Materials.
To further investigate the results, we first map each of the identified node signals to the canonical functional systems \citep{Power2011} as shown by different colors in Figure \ref{fig:selection-ROI}. Though the signals spread over different functional systems, one-third of them are concentrated in the Default Mode Network (DMN). This is consistent with existing neuroscience literature on the activity of the DMN during working memory tasks \citep{schultz2016higher} and the link of DMN regions with general intelligence \citep{anticevic2012role}. We further display the MPP of all the nodes in Figure \ref{fig:MPP_Reg}, which shows the high posterior likelihood for a majority of the identified features. For instance, the postcingulum and  postcentral on the left hemisphere receive the highest MPP, and both are crucial brain regions related to different memory-related functions, including pain and episodic memory retrieval, emotional salience, and spatial memory \citep{Maddock2003, Nielsen2005, Kozlovskiy2012}.
As for the two identified sub-networks, they share some overlapping network structures indicating that topological component might play a role in different neuronal communication processes.
The selected mid-cingulum on the right hemisphere shared by both sub-networks is also identified as a signaling task contrast location in the node-level selection.
Functionally, the mid-cingulum, as part of the memory retrieval module, is cytoarchitecturally located in the Brodmann area which occupies most of the posterior cingulate gyrus (PCC).
The PCC is widely known for its essential role in controlling social and emotional processing, and early neurodevelopment 
, which supports complex interactions among different neural networks \citep{Nielsen2005, Kozlovskiy2012}.
As a major hub of the DMN, the PCC communicates with various brain networks simultaneously with involvement in diverse functions \citep{Leech2014, Leech2012a, Pearson2011}. 
Finally, we also display the MPP for the network-level feature selection and highlight the selected edges under 99\%, 98\%, and 97\%
quantiles of the MPP in Figure \ref{fig:MPP_Net}.
The consistent pattern among different cutoffs indicates the existence of potential hubs, among which the mid-cingulum involves the largest number of connections.


\begin{figure}[!htp]
  \begin{subfigure}{0.95\textwidth}
      \centering
      \includegraphics[width=1\linewidth]{./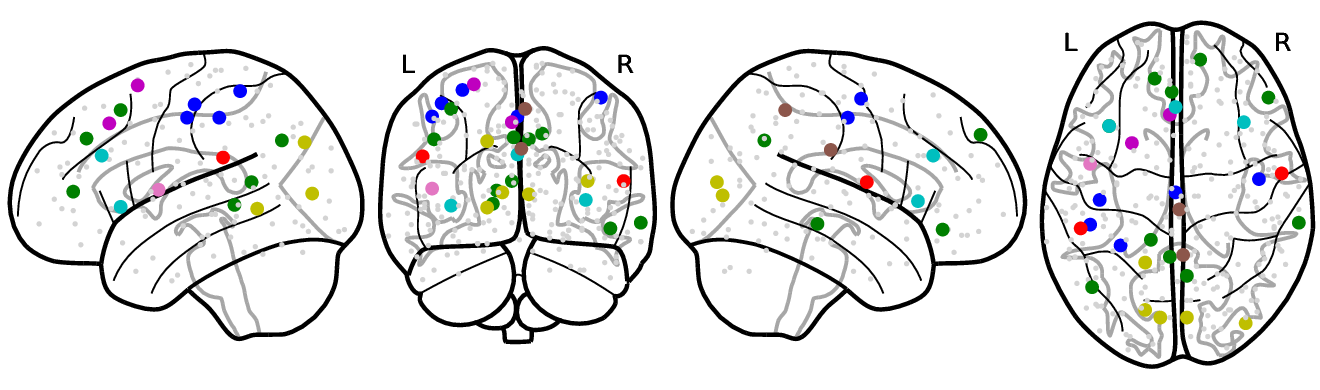}
  \end{subfigure}
  \begin{subfigure}{0.95\textwidth}
      \centering
      \includegraphics[width=.9\linewidth]{./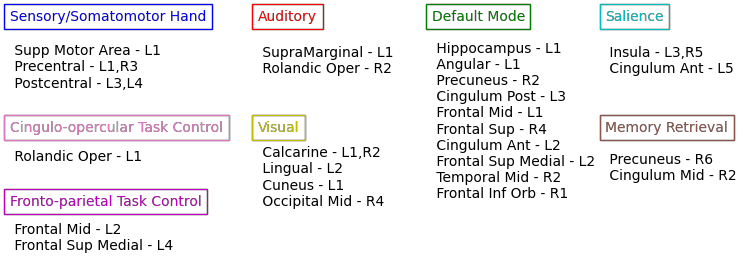}
  \end{subfigure}
  \caption{
      The identified nodes for brain regional features. From left to right are sagittal left hemisphere, coronal, sagittal right hemisphere, and axial views. The selected nodes are labeled by different colors according to which conventional functional system they belong to, and the remaining are showed by the small grey dots.}
  \label{fig:selection-ROI}
\end{figure}

\begin{figure}[!htp]
  \begin{subfigure}{0.74\textwidth}
  \centering
  \hspace{-1cm}
      \includegraphics[width=1\linewidth]{./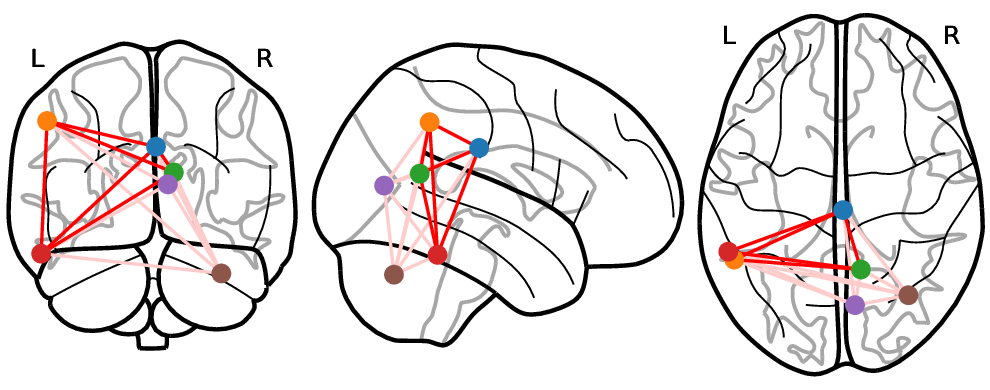}
      \caption{Sub-network 1.}
  \end{subfigure}
  \hfill
  \begin{subfigure}{0.74\textwidth}
  \centering
      \includegraphics[width=1\linewidth]{./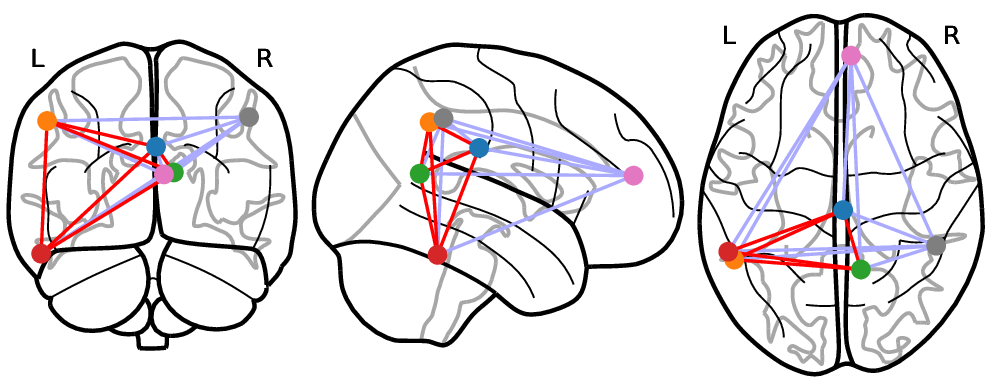}
      \caption{Sub-network 2.}
  \end{subfigure}
  \hfill
  \begin{subfigure}{0.24\textwidth}
      \centering
      \includegraphics[width=1\linewidth]{./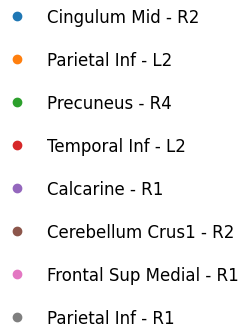}
  \end{subfigure}
  \caption{
  The identified two sub-networks from brain connectivity. The Shared edges are labeled in red, while unique edges are distinguished in pink and purple for sub-network 1 and sub-network 2, respectively.
  }
  \label{fig:selection-Net}
\end{figure}

\begin{figure}[!htp]
  \hspace{-2cm}
  \centering
  \includegraphics[width=1.1\linewidth]{./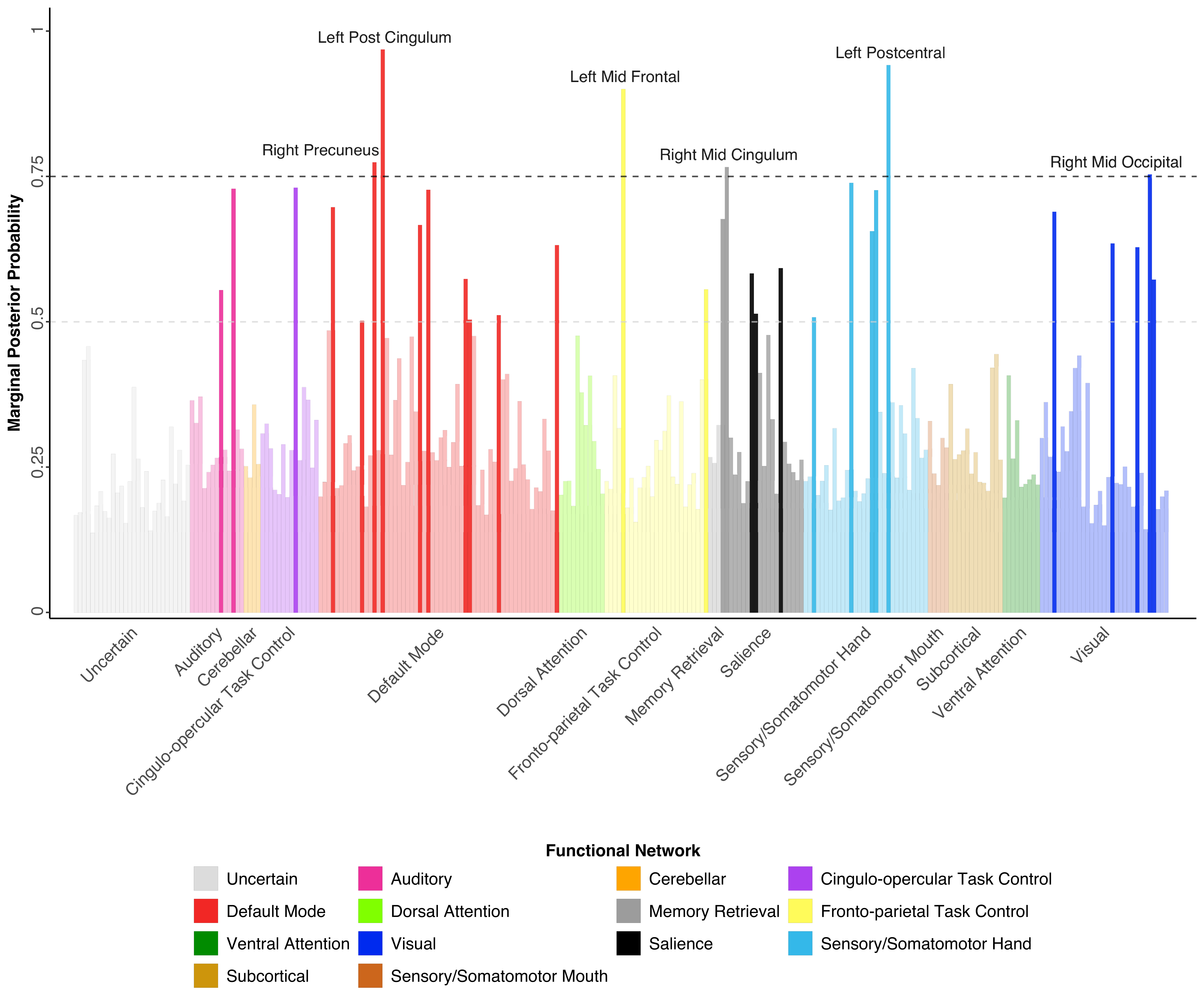}
  \caption{The MPP for node-level features. Nodes are grouped based on their corresponding canonical functional systems. The identified nodes, with MPP larger than 0.5, are displayed in opaque color. There are six nodes with MPP greater than 0.75, out of the total identified 30 nodes. }
  \label{fig:MPP_Reg}
\end{figure}

\begin{figure}[!htp]
  \begin{subfigure}{1\textwidth}
      \centering
      \caption{Identified sub-networks.}
      \label{fig:MMP_Subnet}
      \includegraphics[width=0.9\linewidth]{./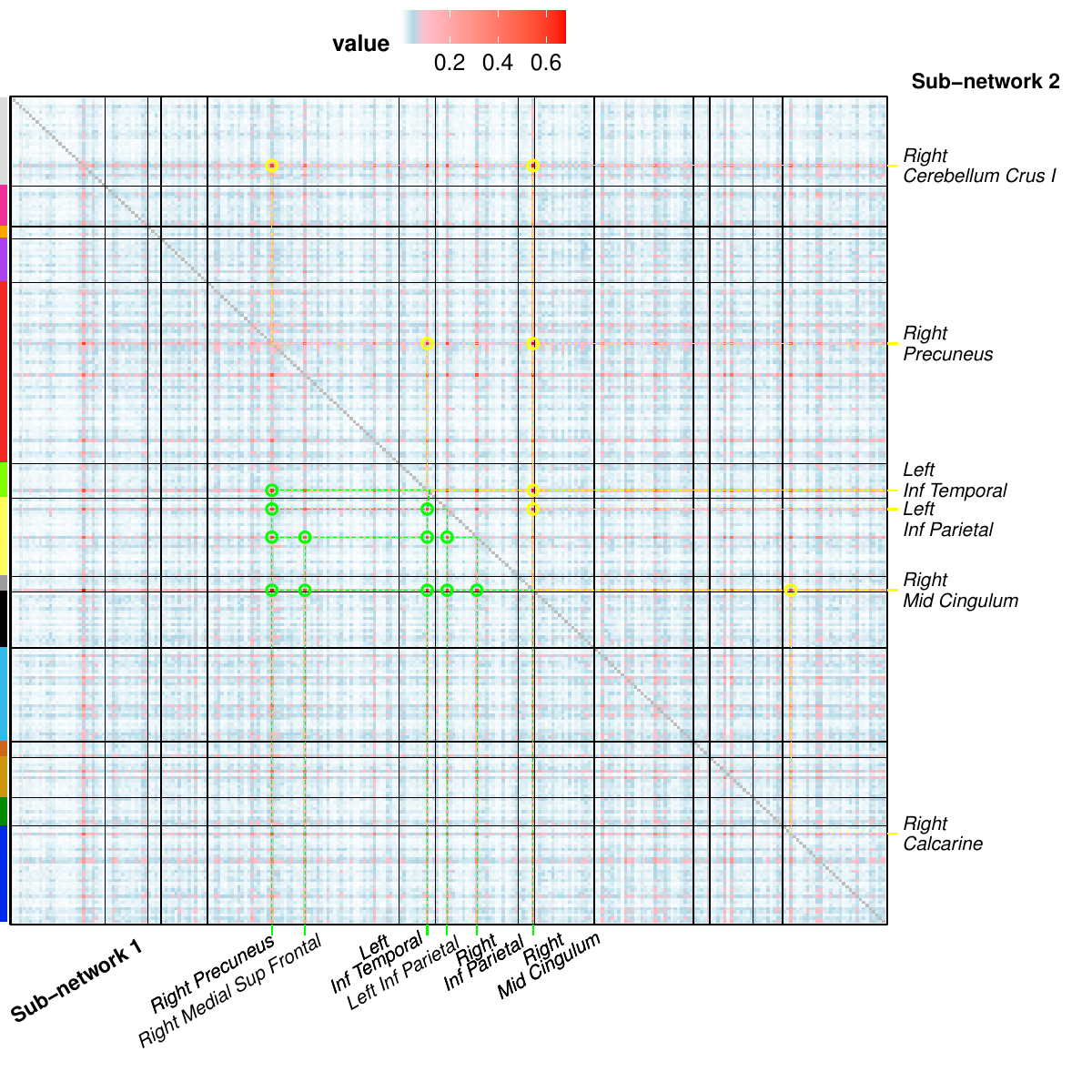}
  \end{subfigure}
  \hfill
  \begin{subfigure}{0.32\textwidth}
      \centering
      \caption{With 99\% percentile. }
      \includegraphics[width=1\linewidth]{./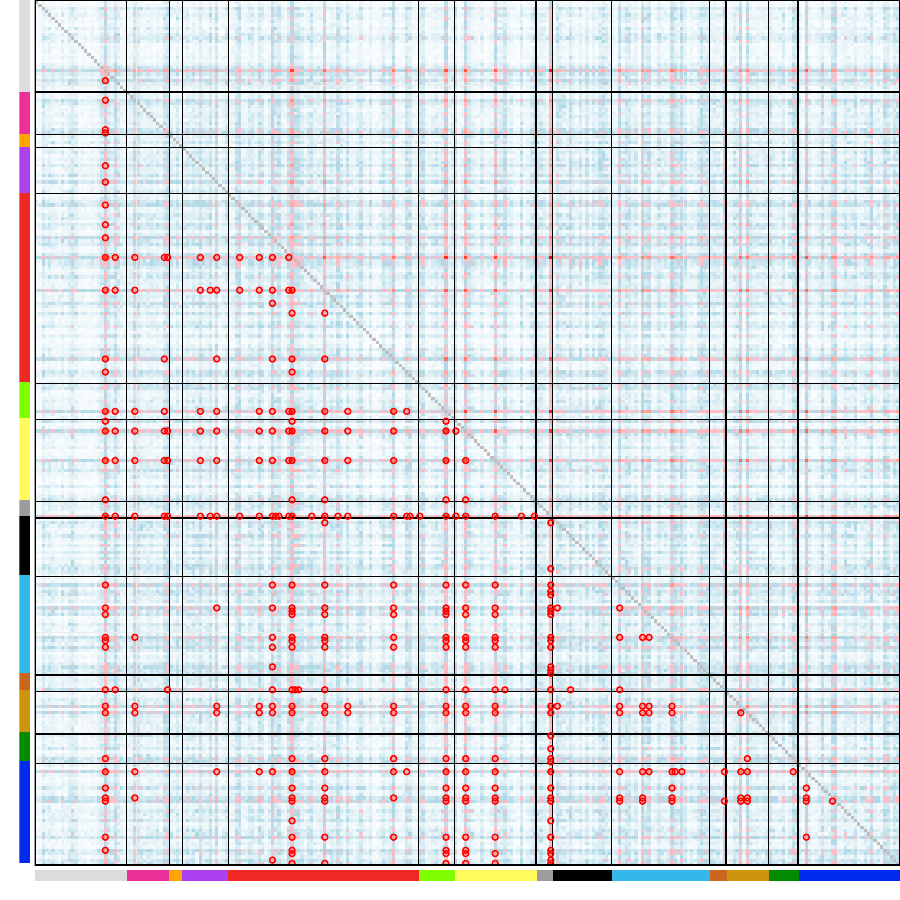}
      \label{fig: MMP_Net99}
  \end{subfigure}
  \begin{subfigure}{0.32\textwidth}
      \centering
      \caption{With 95\% percentile.}
      \includegraphics[width=1\linewidth]{./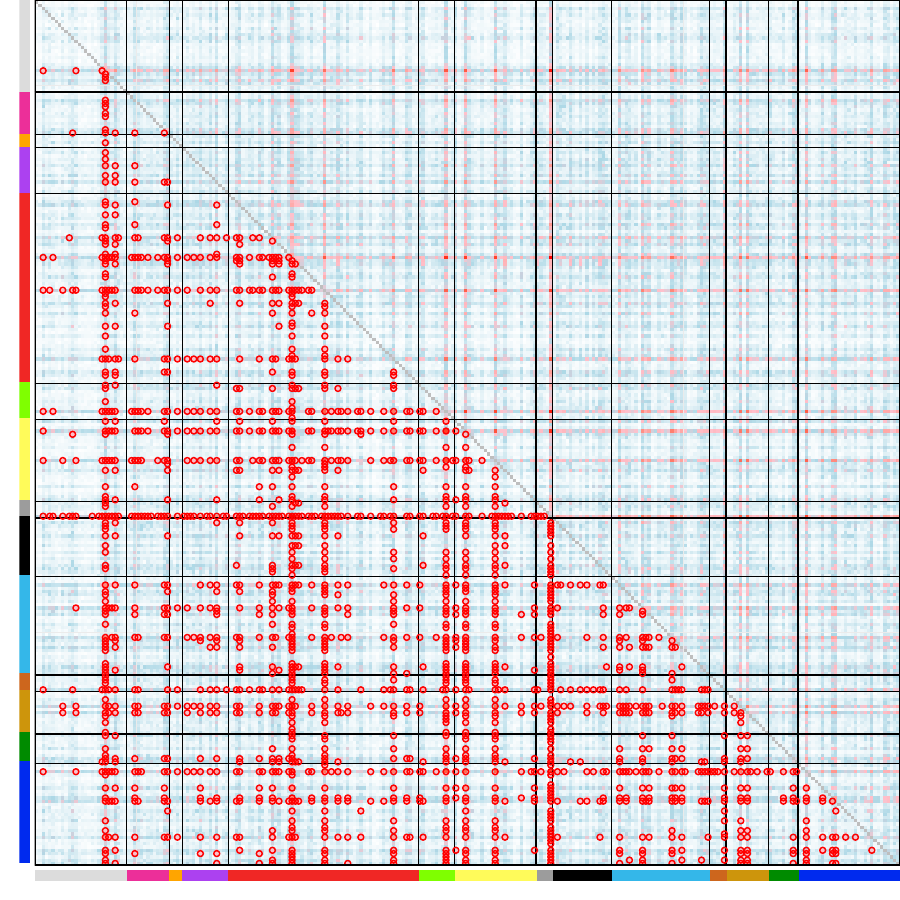}
      \label{fig: MMP_Net95}
  \end{subfigure}
  \begin{subfigure}{0.32\textwidth}
      \centering
      \caption{With 90\% percentile.}
      \includegraphics[width=1\linewidth]{./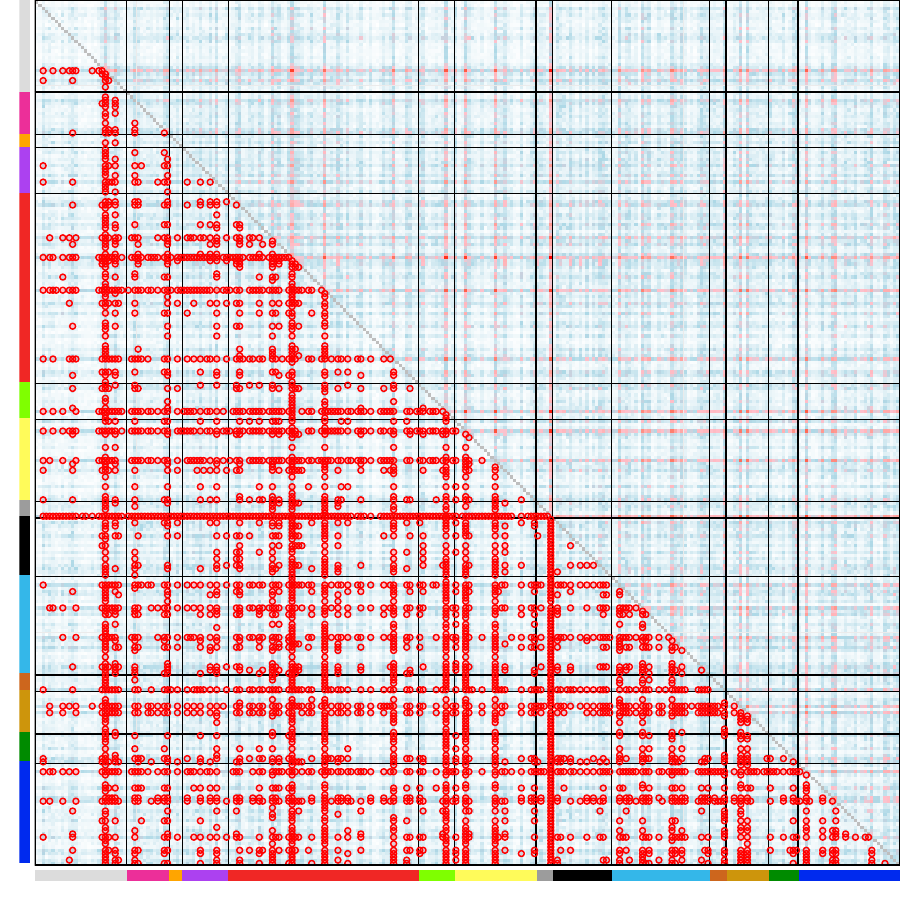}
      \label{fig: MMP_Net90}
  \end{subfigure}
  \caption{The MPP for network-level features.
      Nodes are grouped into different canonical functional systems with the same order and color scheme as in Figure \ref{fig:MPP_Reg}. 
      (a) The two \textcolor{red}{clique} sub-networks identified based on a 0.5 cutoff, displayed in upper- and lower-diagonal respectively. Edges in each sub-network with an MPP greater than 0.5 are circled in different colors, respectively.
      (b) -- (d) Edges with MPP greater than the cutoff values, 99\%, 98\% and 97\% quantiles of the MPP, are circled in red, respectively.
      }
  \label{fig:MPP_Net}
\end{figure}

\section{Discussion} \label{sec:dis}
In this paper, motivated by the urgent need in neuroscience studies to incorporate brain activities from different hierarchies, we address a gap in the literature by developing a unified Bayesian supervised learning framework that integrates vector-variate and matrix-variate predictors and accommodates the topological structures within the network features. To comprehensively
characterize the interplay that broadly exists between different predicting components, we propose a joint thresholded prior model to capture the coupling and grouping effect of signals, with consideration for their spatial contiguity across brain anatomy. Through extensive simulations, we demonstrate the superior performance of our method in prediction and feature selection. In our application to the landmark ABCD study, we provide a powerful predictive mechanism for children's general mental abilities by combining resting-state brain connectivity and memory task brain contrast maps, and identify signaling brain functional neuromarkers on the activation locations including postcingulum and postcentral areas, and intrinsic connectivity involving mid-cingulum.

There is always a trade-off in Bayesian models between the level of information that priors capture for desired knowledge and structure, and the model's robustness in handling general scenarios. In our current framework, despite the prior assumption on the coupling effect between attributes on the node and connectivity involving the node, the non-informative prior support for $\rho$ ensures the model performs equally well when signals are independent, as demonstrated by numerical studies. From another aspect, the hierarchical information in different brain locations may interact differently. For instance, we might see stronger correlations between activations and connections in the DMN under certain conditions given its dominant engagement in many cognitive activities \citep{smith2020default}. One possible extension of this work is to refine the global correlation parameter to make it spatially dependent. From a biological perspective, uncovering these correlations could provide valuable insights to better understand  the potential heterogeneity of neuronal functions.

Our current data application focuses on integrating different summarized levels of brain functional traits. As mentioned at the beginning of the paper, the model can be readily applicable to study node- and network-level structural traits that are anticipated to share a similar interactive mechanism and can also be accessed in neuroimaging studies including the ABCD. Meanwhile, along with the emerging interest to study the relationship between structural and functional imaging measurements, our model can also be slightly tailored to incorporate both modalities on top of different variates. This will allow a comprehensive analysis to characterize the underlying multi-way relationship among most of our commonly used neuromarkers. Furthermore, given that the ABCD study is prospective, it is also highly interesting to incorporate temporal effect in the analyses by investigating the dynamics of both node-level and network-level effects, as well as their interplay in future studies. Finally, with the availability of genotyping data, we are also working on potential extensions to incorporate genetic variants to build correspondence among genetic contributions, multi-variate brain activities and phenotypic profiles.


\section*{Acknowledgments} \label{acknowledgments}
\textit{Conflict of Interest}: None declared.

\section*{Funding} \label{funding}
This work was partially funded by awards RF1AG081413, R01EB034720 and  RF1AG068191 from the National Institutes of Health (NIH). The content is solely the responsibility of the authors and does not necessarily represent the official views of the NIH.

\clearpage
\appendix

{\noindent \Large\bf Supplementary Materials}

\section*{Section A. Proof of Theorem 1.}
Here we provide the proof for Theorem 1.
Let $[P] = \{1, 2, \cdots, P\}$. Define $g_r \subset [P], r=1, \cdots, R$ to be the support set of $\bfalpha^{(r)}$. That means, $g_r = \{j: \alpha^{(r)}_j \ne 0\}$. The resulting $\A=\sum_{r=1}^R\bfalpha^{(r)}\otimes\bfalpha^{(r)}$ thus has the following property:
\begin{equation}\label{eq:support-consistency}\tag{A.1}
a_{ij}\ne 0 ~\text{iff}~ \{i,j\} \subset g_r, \text{for some } r\in [R].
\end{equation}
The above \emph{support consistency} is generally true except on a negligible set of random events. So we treat all of our following discussions under this support-consistency scenario without loss of generality. In practice, condition \eqref{eq:support-consistency} is directly verifiable from data.

\begin{thm}
Suppose $\{g_r\}_{r=1}^R$ is support consistent with coefficient matrix $\bfA$ and each $g_r$ contains at least one unique variable that is not included in other sets. The clique set $\{g_r\}_{r=1}^R$ is the unique and minimum support consistent clique set of matrix $\bfA$
\end{thm}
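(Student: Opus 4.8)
The plan is to recast support consistency as a statement about a clique edge cover of the support graph of $\bfA$ and then exploit the ``private variable'' hypothesis. Concretely, define the graph $G_{\bfA}$ on vertex set $[P]$ in which nodes $i,j$ are adjacent (allowing $i=j$) exactly when $a_{ij}\ne 0$. Because $a_{ii}=\sum_{r}(\alpha^{(r)}_i)^2$ is a sum of squares, a diagonal entry is nonzero precisely when $i$ is active, i.e.\ $i\in g_r$ for some $r$, so the diagonal introduces no cancellation issues. Under \eqref{eq:support-consistency}, a family $\{h_t\}_{t=1}^T$ is support consistent with $\bfA$ if and only if (i) every $h_t$ is a clique of $G_{\bfA}$ and (ii) every nonzero pair $\{i,j\}$ is contained in some $h_t$; that is, $\{h_t\}$ is a clique edge cover of $G_{\bfA}$. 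I would state this reformulation first, since both minimality and uniqueness then become counting and containment arguments on $G_{\bfA}$.

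The crucial preliminary observation is that each $g_r$ is a \emph{maximal} clique, and this is exactly where the private variable enters. Let $u_r\in g_r$ be the variable unique to $g_r$, so that $u_r\notin g_s$ for every $s\ne r$; note the $u_r$ are automatically distinct. If some $v\notin g_r$ could be adjoined to $g_r$ while preserving the clique property, then $a_{v u_r}\ne 0$, and \eqref{eq:support-consistency} forces $\{v,u_r\}\subset g_s$ for some $s$; but $u_r\in g_s$ only for $s=r$, giving $v\in g_r$, a contradiction. The same mechanism shows that no clique of $G_{\bfA}$ can contain two distinct private variables $u_r,u_{r'}$: such a clique would make $a_{u_r u_{r'}}\ne 0$, whence $\{u_r,u_{r'}\}\subset g_s$ for a single $s$, which is impossible.

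With these facts in hand, minimality is immediate. Given any support consistent cover $\{h_t\}_{t=1}^T$, each active diagonal entry $a_{u_r u_r}\ne 0$ must be covered, so some $h_t$ contains $u_r$; since each $h_t$ holds at most one private variable while all $R$ private variables must be covered, a pigeonhole count yields $T\ge R$. For uniqueness I would take $T=R$ and use the forced bijection: each $h_t$ then contains exactly one $u_r$, so after relabelling $u_r\in h_r$ and $u_r\notin h_t$ for $t\ne r$. I then sandwich $h_r=g_r$ from both sides. For $h_r\subseteq g_r$, any $v\in h_r$ is clique-adjacent to $u_r$, so $a_{v u_r}\ne 0$ and \eqref{eq:support-consistency} gives $v\in g_r$. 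For $g_r\subseteq h_r$, any $w\in g_r$ satisfies $a_{w u_r}\ne 0$ (both lie in $g_r$), and coverage requires this pair to sit in some $h_t$; as $u_r\in h_t$ only for $t=r$, we get $w\in h_r$. Hence $\{h_r\}_{r=1}^R=\{g_r\}_{r=1}^R$.

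The main obstacle, I expect, is not any single step but the discipline of using \emph{both} directions of the ``iff'' in \eqref{eq:support-consistency} correctly: the maximality and containment arguments rely on ``$a_{ij}\ne 0 \Rightarrow \{i,j\}\subset g_s$'', whereas the coverage arguments rely on the converse together with the definition of a support consistent cover. Keeping the role of the private variables $u_r$ — which simultaneously certify maximality, forbid two-private-variable cliques, and pin down the bijection — cleanly separated across these uses is the part that needs care; everything else reduces to elementary clique and pigeonhole reasoning on $G_{\bfA}$.
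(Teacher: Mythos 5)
Your proof is correct and takes essentially the same route as the paper's: the private variable $u_r$ (the paper's $i_r$) forces every cover set containing it to lie inside $g_r$ (the paper's condition (A.2)) and forces $g_r$ to be covered exactly by such sets (the paper's (A.3)), from which minimality and uniqueness follow. Your clique-edge-cover framing, the one-private-variable-per-clique lemma, and the explicit pigeonhole and sandwich arguments simply make explicit the counting and the disjointness of the index sets $\Omega_r$ that the paper compresses into its final sentence, and your diagonal sum-of-squares remark is a sensible clarification but not load-bearing.
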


\begin{proof}
For the ease of notations, define $g_{-r} = \cup_{r'\ne r}g_{r'}$. Suppose $\tilde{g}_k, k=1 \cdots, K$ is another support consistent clique set for $\A$. For each $r$, let $i_r$ be one variable in $g_r$ that does not exist in $g_{r'}, r'\ne r$. Suppose $i_r \in \tilde{g}_{k_r}$ for some $k_r \in [K]$.

First, because of the support consistency, for any $i \notin \cup_rg_r$, we have $a_{ij} = 0$ for all $j \in [P]$. Thus we know that either $i\notin \tilde{g}_{k}$ for all $k\in [K]$ or there exists a single set $\{i\}$ in $\{\tilde{g}_k\}$.  We will focus on the latter case by assuming that $\{\tilde{g}_k\}$ does not include new variables, because as can be seen clearly, including additional singleton sets will only increase the number of cliques.

By the support consistency of $\{g_r\}$ to $\A$, we know that $a_{i_r,j}=0$ for any $j \in g_{-r}\setminus g_{r}$. Therefore, it must be true that $\tilde{g}_{k_r} \cap (g_{-r}\setminus g_{r}) = \emptyset$, which is again, because of the support consistency. Thus we have
\begin{equation}\label{cond1}\tag{A.2}
\tilde{g}_{k_r}\subset g_r, r=1, \cdots, R.
\end{equation}

Meanwhile, define $\Omega_r = \{k: i_r \in \tilde{g}_k\}$. Since $\{\tilde{g}_k\}$ has to ensure the consistency with $a_{i_r,j} \ne 0, j \in g_r$, we have
\begin{equation}\label{cond2}\tag{A.3}
g_r \subset \cup_{k \in \Omega_r}\tilde{g}_k.
\end{equation}
Combining \eqref{cond1} and \eqref{cond2} leads to
$$ \cup_{k \in \Omega_r}\tilde{g}_k \subset g_r \subset  \cup_{k \in \Omega_r}\tilde{g}_k.$$
Thus we have for any $r$,
$$\cup_{k \in \Omega_r}\tilde{g}_k  = g_r.$$

Thus the only way to minimize the number of cliques $K$ is to set $K=R$ and $\tilde{g}_r = g_r$, up to a permutation of the indices.

\end{proof}

\section*{Section B. MCMC algorithm}
Recall that we adopt the squared exponential kernel structures for the covariance matrices and assume that $\sigma_{\gamma}=\sigma_{g}=\sigma$, i.e., $\bSigma_{\gamma}=\bSigma_{\theta}=\sigma\bO$, where we have each element of $\bO$ as $o_{ij}=\exp\{-\|\s_p - \s_j\|_2^2/2\}$ with $\s_p \in \mathbb{R}^{3}$ being the 3D coordinate for node $p  (p=1, \ldots, P)$.
Let $\bO = \bU\trans\diag\{\d\}\bU$, where $\bU = (\bu_1, \ldots, \bu_P) \in \mathbb{R}^{P \times P}$ and $\d = (d_1, \ldots, d_P)\trans \in \mathbb{R}^{P \times 1}$ are the matrix consisting of eigenvectors and vector of eigenvalues of $\bO$, respectively.
Assume $\bLambda = \rho\sigma_c \I_P=\rho(\sigma/\delta)\I_P$ as discussed in Section 3.1 in the main paper, with $\delta$ a pre-fixed large number.
The posterior inference can be easily adjusted for assigning different values for $\sigma_{\gamma}$,  $\sigma_{g}$ and $\sigma_c$.

We provide the detailed MCMC algorithm for the JNNTs model with identity link function based on Gibbs sampler and MH steps.
Let $\y = (y_{1}, \ldots, y_{N})\trans \in \mathbb{R}^{N \times 1}$, $\W=(\w_{1}, \ldots, \w_{N})\trans \in \mathbb{R}^{N \times Q}$,
$\X=(\x_1, \ldots, \x_N)\trans \in \mathbb{R}^{N \times P}$, and $\Z=(\Z_1, \ldots, \Z_N)\trans \in \mathbb{R}^{NP \times P}$. 
Also denote
\begin{align*}
    \bDelta_{\lambda}(\bgamma) = &
    \diag\Big\{\mathbb{I}(|\gamma_1|>\lambda), \ldots, \mathbb{I}(|\gamma_P|>\lambda)\Big\}, \\
    \bDelta_{\lambda}(\btheta, \btheta^{(r)}) = &
    \diag\Big\{\mathbb{I}(|\theta_1|>\lambda) \mathbb{I}(|\theta^{(r)}_1|>\lambda), \ldots, \mathbb{I}(|\theta_P|>\lambda) \mathbb{I}(|\theta^{(r)}_P|>\lambda)\Big\}, \\
    \bDelta_{\rho} = &
    \diag\Big\{1/(d_1^2 - \rho^2/\delta^2), \ldots, 1/(d_P^2 - \rho^2/\delta^2)\Big\}. 
\end{align*}
With random initials, we iteratively update parameters according to the following steps.  
\begin{itemize}
    \item[] Sampling scheme for $\bleta$. Draw
    \begin{align*}
        [\bleta \mid \mathcal{D}, \lambda, \bgamma, \wtbbeta, \btheta, \{\btheta^{(r)}\}_{r=1}^R, \{\wtbalpha^{(r)}\}_{r=1}^R, \sigma_{\epsilon}] \sim \mbox{N}(\bmu_{\eta}, \bSigma_{\eta}),
    \end{align*}
    with $\bSigma_{\eta} = \Big(\W\trans\W/\sigma_{\epsilon} + (1/\sigma_{\eta})\bI_{Q}\Big)^{-1}$,
    and $\bmu_{\eta} = (1/\sigma_{\epsilon}) \bSigma_{\eta}\W\trans\Big(\y - \X\bDelta_{\lambda}(\bgamma)\wtbbeta - \sum_{r=1}^R \big(\I_n \otimes \big( \bDelta_{\lambda}(\btheta, \btheta^{(r)})\wtbalpha^{(r)}\big)\trans \big)\Z \bDelta_{\lambda}(\btheta, \btheta^{(r)}) \wtbalpha^{(r)} \Big)$.
    \item[] Sampling scheme for $\wtbbeta$. Draw
        \begin{align*}
            [\wtbbeta \mid \mathcal{D}, \lambda, \bleta, \bgamma, \btheta, \{\btheta^{(r)}\}_{r=1}^R, \{\wtbalpha^{(r)}\}_{r=1}^R, \sigma^2_{\epsilon}, \sigma^2_{\beta}] \sim \mbox{N}(\bmu_{\beta}, \bSigma_{\beta}),
        \end{align*}
        with $\bSigma_{\beta} = \Big(\bDelta_{\lambda}(\bgamma) \X\trans \X \bDelta_{\lambda}(\bgamma)/\sigma_{\epsilon} + (1/\sigma_{\beta})\bI_{P}\Big)^{-1}$,
    and $\bmu_{\beta} = \bSigma_{\beta} \Big(\bDelta_{\lambda}(\bgamma) \X\trans
        \big(\y - \W\bleta - \sum_{r=1}^R \big(\I_n \otimes \big( \bDelta_{\lambda}(\btheta, \btheta^{(r)})\wtbalpha^{(r)}\big)\trans \big)\Z \bDelta_{\lambda}(\btheta, \btheta^{(r)}) \wtbalpha^{(r)} \big)/\sigma_{\epsilon} +
        \bgamma/\sigma_{\beta} \Big)$.
        \item[] Sampling scheme for ${\wtbalpha}^{(r)}$.
        Define
        \begin{align} \label{eq:e1}
        \xi_{i,r,p} =
        & y_i - \bleta\trans\w_i - \wtbbeta\trans\bDelta_{\lambda}(\bgamma) \x_i -\sum_{k \neq r}^R \big(\bDelta_{\lambda}(\btheta, \btheta^{(k)})\wtbalpha^{(k)}\big)\trans \Z_i \bDelta_{\lambda}(\btheta, \btheta^{(k)}) \wtbalpha^{(k)} \nn \\ \nonumber
        & - \sum_{j,l \neq p}^P \mathbb{I}(|\theta_j|>\lambda) \mathbb{I}(|\theta^{(r)}_j|>\lambda) \mathbb{I}(|\theta_l|>\lambda) \mathbb{I}(|\theta^{(r)}_l|>\lambda) z_{ijl} \talpha^{(r)}_{j}\talpha^{(r)}_{l}. \tag{B.1}
        \end{align}
        For $r=1, \ldots, R$ and $p=1, \ldots, P$, if $\mathbb{I}(|\theta_p|>\lambda) = 1$ and $\mathbb{I}(|\theta^{(r)}_p|>\lambda) = 1$, draw
        \begin{align*}
            [ \talpha_{p}^{(r)} \mid \mathcal{D}, \lambda, \bleta, \bgamma, \wtbbeta, \btheta, \{\btheta^{(r)}\}_{r=1}^R, \{\wtbalpha^{(k)}\}_{k \neq r}^P, \{\talpha_j^{(r)}\}_{j \neq p}^P, \sigma_{\epsilon}, \sigma_{\alpha}] \sim \mbox{N}(\mu_{\alpha_p^r}, \sigma_{\alpha_p^r}),
        \end{align*}
        with $\sigma_{\alpha_p^r} = \Big(1/\sigma_{\alpha} + \big(4/\sigma_{\epsilon}\big) \sum_{i}^{N}
        \big(\sum_{j \neq p}^P  \mathbb{I}(|\theta_j|>\lambda) \mathbb{I}(|\theta^{(r)}_j|>\lambda) z_{ipj}\talpha^{(r)}_{j}\big)^2
        \Big)^{-1}$,
        and $\mu_{\alpha_p^r} = \sigma_{\alpha_p^r} \Big(
            \theta_p/\sigma_{\alpha} + (2/\sigma_{\epsilon})\sum_{i=1}^{N} \xi_{i, r, p} \big(\sum_{j \neq p}^P  \mathbb{I}(|\theta_j|>\lambda) \mathbb{I}(|\theta^{(r)}_j|>\lambda) z_{ipj}\talpha^{(r)}_{j}\big)
        \Big)$;
        Otherwise draw from $\mbox{N}(\theta_p, \sigma_{\alpha})$.
        \item[] Sampling scheme for ${\btheta}^{(r)}$.
        For $r=1, \ldots, R$, $p =1, \ldots, P$, if $|\theta_p|>\lambda$, draw
        \begin{align*}
            [\theta^{(r)}_{p} \mid & \mathcal{D}, \lambda, \bleta, \bgamma, \wtbbeta, \btheta, \{\btheta^{(k)}\}_{k \neq r}, \{\theta^{(r)}_j\}_{j \neq p}, \{\wtbalpha^{(r)}\}_{r=1}^R, \sigma_{\epsilon}, \sigma_{\theta}] \\
            & \sim \psi^{\theta}_{-1}\mbox{TN}_{(-\infty, -\lambda)}(\theta_p, \sigma_{\theta}) + \psi^{\theta}_{0}\mbox{TN}_{(-\lambda, \lambda)}(\theta_p, \sigma_{\theta}) + \psi^{\theta}_{1}\mbox{TN}_{(\lambda, +\infty)}(\theta_p, \sigma_{\theta}),
        \end{align*}
        where $\mbox{TN}_{(a,b)}(\mu, \tau)$ is the truncated normal distribution with mean $\mu$, variance $\tau$ and support $[a, b]$,
        and $\psi^{\theta}_{k} = c^{\theta}_{k}/\sum_{k=-1,0,1}c^{\theta}_{k}$ with
        \begin{align*}
            c^{\theta}_{-1} & = \exp
            \Big\{-\frac{1}{2\sigma_{\epsilon}}
            \sum_{i=1}^{N} \big(
                \xi_{i,r,p} - 2\talpha^{(r)}_{p}\sum_{j \neq p}^P  \mathbb{I}(|\theta_j|>\lambda) \mathbb{I}(|\theta^{(r)}_j|>\lambda) z_{ipj}\talpha^{(r)}_{j} \big)^2
            \Big\}
            F\Big(-\frac{\lambda + \theta_p}{\sqrt{\sigma_{\theta}}}\Big), \\
            c^{\theta}_{0}  & = \exp
            \Big\{-\frac{1}{2\sigma_{\epsilon}}
                \sum_{i=1}^N\xi_{i,r,p}^2
            \Big\}
            \Big(F\big(\frac{\lambda - \theta_p}{\sqrt{\sigma_{\theta}}}\big) - F\big(-\frac{\lambda + \theta_p}{\sqrt{\sigma_{\theta}}} \big)\Big),\\
            c^{\theta}_{1}  & = \exp
            \Big\{-\frac{1}{2\sigma_{\epsilon}}
            \sum_{i=1}^{N} \big(
                \xi_{i,r,p} - 2\talpha^{(r)}_{p} \sum_{j \neq p}^P  \mathbb{I}(|\theta_j|>\lambda) \mathbb{I}(|\theta^{(r)}_j|>\lambda) z_{ipj}\talpha^{(r)}_{j} \big)^2
            \Big\}
            F\Big(-\frac{\lambda - \theta_p}{\sqrt{\sigma_{\theta}}}\Big), 
        \end{align*}
        where $F(x)$ is the cumulative distribution function (CDF) of the standard normal distribution, and $\xi_{i,r,p}$ is the same as in \eqref{eq:e1}; Otherwise draw $\theta_p \sim \mbox{N}(\theta_p, \sigma_{\theta})$.
        \item[] Sampling scheme for $\bgamma$. For $p = 1, \ldots, P$, draw
        \begin{align*}
            [\gamma_{p} \mid & \mathcal{D}, \lambda, \bleta, \{\gamma_j\}_{j \neq p}^P, \wtbbeta, \btheta, \{\btheta^{(r)}\}_{r=1}^R, \{\wtbalpha^{(r)}\}_{r=1}^R, \sigma_{\epsilon}, \sigma_{\beta}, \sigma, \rho ] \\
            & \sim \psi^{\gamma}_{-1} \mbox{TN}_{(-\infty, -\lambda)}(\mu_{\gamma_p}, \sigma_{\gamma_p}) + \psi^{\gamma}_{0} \mbox{TN}_{(-\lambda, \lambda)}(\mu_{\gamma_p}, \sigma_{\gamma_p}) + \psi^{\gamma}_{1} \mbox{TN}_{(\lambda, +\infty)}(\mu_{\gamma_p}, \sigma_{\gamma_p}),
        \end{align*}
        where $\sigma_{\gamma_p}=\Big(1/\sigma_{\beta} + (1/\sigma) \bu_{p}\trans \bDelta_{\rho}\diag\{\d\}\bu_{p}\Big)^{-1}$,
        $\mu_{\gamma_p}=\sigma_{\gamma_p} \Big(\tilde{\beta}_{p}/\sigma_{\beta} + (1/\sigma)\bu_{p}\trans \bDelta_{\rho} \big(\rho \bU\btheta - \diag\{\d\} \sum_{j \neq p}\gamma_j\bu_{j}\big) \Big)$,
        and $\psi^{\gamma}_{k} = c^{\gamma}_{k}/\sum_{k=-1,0,1}c^{\gamma}_{k}$ with
        \begin{align*}
            c_{-1}^{\gamma} & = \exp \Big\{-\frac{1}{2\sigma_{\epsilon}}
            \sum_{i=1}^{N}\big(\xi_{i, p}^{\gamma} - x_{ip}\tilde{\beta}_{p} \big)^2 \Big\}
            F\Big(-\frac{\lambda + \mu_{\gamma_p}}{\sqrt{\sigma_{\gamma_p}}}\Big), \\
            c_{0}^{\gamma}  & = \exp \Big\{-\frac{1}{2\sigma_{\epsilon}}\sum_{i=1}^N\big(\xi_{i, p}^{\gamma}\big)^2\Big\}
            \Big(F \big(\frac{\lambda - \mu_{\gamma_p}}{\sqrt{\sigma_{\gamma_p}}}\big) - F\big(-\frac{\lambda + \mu_{\gamma_p}}{\sqrt{\sigma_{\gamma_p}}}\big)\Big),\\
            c_{1}^{\gamma}  & = \exp\Big\{-\frac{1}{2\sigma_{\epsilon}}
            \sum_{i=1}^{N}\big(\xi_{i, p}^{\gamma} - x_{ip}\tilde{\beta}_{p} \big)^2 \Big\}
            F\Big(-\frac{\lambda - \mu_{\gamma_p}}{\sqrt{\sigma_{\gamma_p}}}\Big),
        \end{align*}
        where $\xi_{i, p}^{\gamma} = y_i - \bleta\trans\w_i - \sum_{j \neq p}^P x_{ij}\tilde{\beta}_{j}\mathbb{I}(|\gamma_j|>\lambda) - \sum_{r=1}^R (\wtbalpha^{(r)})\trans \bDelta_{\lambda}(\btheta, \btheta^{(r)}) \Z_i \bDelta_{\lambda}(\btheta, \btheta^{(r)}) \wtbalpha^{(r)}$.

        \item[] Sampling scheme for $\btheta$. For $p=1, \ldots, P$, draw
        \begin{align*}
            [\theta_{p} \mid & \mathcal{D}, \lambda, \bleta, \bgamma, \wtbbeta, \{\theta_{j}\}_{j \neq p}, \{\btheta^{(r)}\}_{r=1}^R, \{\wtbalpha^{(r)}\}_{r=1}^R, \sigma_{\epsilon}, \sigma_{\alpha}, \sigma_{\theta}, \sigma, \rho] \\
            & \sim \psi^{g}_{-1} \mbox{TN}_{(-\infty, -\lambda)}(\mu_{\theta_p}, \sigma_{\theta_p}) +\psi^{g}_{0}\mbox{TN}_{(-\lambda, \lambda)}(\mu_{\theta_p}, \sigma_{\theta_p}) + \psi^{g}_{-1}\mbox{TN}_{(\lambda, +\infty)}(\mu_{\theta_p}, \sigma_{\theta_p}),
        \end{align*}
        where $\sigma_{\theta_p}=\Big(R/\sigma_{\alpha} + R/\sigma_{\theta} + (1/\sigma)\bu_{p}\trans \bDelta_{\rho}\diag\{\d\}\bu_{p}\Big)^{-1}$,
        $\mu_{\theta_p}=\sigma_{\theta_p} \Big((1/\sigma_{\alpha})\sum_{r=1}^R\talpha^{(r)}_p + (1/\sigma_{\theta})\sum_{r=1}^R \ttheta^{(r)}_p + (1/\sigma)\bu_{p}\trans \bDelta_{\rho}\big(\rho \bU\bgamma - \diag\{\d\} \sum_{j\neq p}^P \theta_j\bu_{j} \big)\Big)$,
        and $\psi^{g}_{k} = c^{g}_{k}/\sum_{k=-1,0,1}c^{g}_{k}$ with
        \begin{align*}
            c^{g}_{-1} & = \exp
            \Big\{-\frac{1}{2\sigma_{\epsilon}}\sum_{i=1}^{N}
                \Big( \xi^{g}_{i, p} - 2\sum_{r=1}^{R} \mathbb{I}(|\theta^{(r)}_p|>\lambda)\talpha^{(r)}_{p} \big(\sum_{j \neq p}^P  \mathbb{I}(|\theta_j|>\lambda) \mathbb{I}(|\theta^{(r)}_j|>\lambda) z_{ipj}\talpha^{(r)}_{j} \big) \Big)^2
            \Big\}
            \\ & \qquad \quad \times
            F\Big(-\frac{\lambda + \mu_{\theta_p}}{\sqrt{\sigma_{\theta_p}}}\Big), \\
            c^{g}_{0}  & = \exp \Big\{-\frac{1}{2\sigma_{\epsilon}}\sum_{i=1}^{N}\big(\xi^{g}_{i, p}\big)^2\Big\}
            \Big(F\big(\frac{\lambda - \mu_{\theta_p}}{\sqrt{\sigma_{\theta_p}}}\big) - F\big(-\frac{\lambda + \mu_{\theta_p}}{\sqrt{\sigma_{\theta_p}}}\big)\Big),\\
            c^{g}_{1}  & = \exp
            \Big\{-\frac{1}{2\sigma_{\epsilon}}\sum_{i=1}^{N}
                \Big(\xi^{g}_{i, p} - 2\sum_{r=1}^{R} \mathbb{I}(|\theta^{(r)}_p|>\lambda)\talpha^{(r)}_{p} \big(\sum_{j \neq p}^P  \mathbb{I}(|\theta_j|>\lambda) \mathbb{I}(|\theta^{(r)}_j|>\lambda) z_{ipj}\talpha^{(r)}_{j} \big) \Big)^2
            \Big\}
            \\ & \qquad \quad \times
            F\Big(-\frac{\lambda - \mu_{\theta_p}}{\sqrt{\sigma_{\theta_p}}}\Big),
        \end{align*}
        where
        \begin{align*}
        \xi^{g}_{i, p} = & y_i - \bleta\trans\w_i \\
        & - \wtbbeta \trans \bDelta_{\lambda}(\bgamma) \x_i - \sum_{r=1}^R \sum_{j, \ell \neq p}^P \mathbb{I}(|\theta_{j}|>\lambda) \mathbb{I}(|\theta^{(r)}_{j}|>\lambda) \mathbb{I}(|\theta_{\ell}|>\lambda) \mathbb{I}(|\theta^{(r)}_{\ell}|>\lambda) \talpha^{(r)}_{j}\talpha^{(r)}_{\ell}z_{ij\ell}.
        \end{align*}
        \item [] Sampling scheme for $\sigma_{\beta}$. Draw
        \begin{align*}
            [\sigma_{\beta} \mid \wtbbeta, \bgamma] \sim \mbox{IG}\Big(a_{\beta} + P/2, b_{\beta} + (1/2)\|\wtbbeta - \bgamma\|_2^2\Big).
        \end{align*}
    \item [] Sampling scheme for $\sigma_{\alpha}$. Draw
        \begin{align*}
            [\sigma_{\alpha} \mid \{\wtbalpha^{(r)}\}_{r=1}^R, \btheta] \sim \mbox{IG}\Big(a_{\alpha} + PR/2, b_{\alpha} + (1/2)\sum_{r=1}^R\|\wtbalpha^{(r)} -\btheta\|_2^2\Big).
        \end{align*}
    \item [] Sampling scheme for $\sigma_{\theta}$. Draw
        \begin{align*}
            [\sigma_{\theta} \mid \{\btheta^{(r)}\}_{r=1}^R, \btheta] \sim \mbox{IG}\Big(a_{\theta} + PR/2, b_{\theta} + (1/2)\sum_{r=1}^R\|\btheta^{(r)}-\btheta\|_2^2\Big).
        \end{align*}
    \item [] Sampling scheme for $\sigma_{\epsilon}$. Draw
        \begin{align*}
            [ \sigma_{\epsilon} \mid \mathcal{D}, \lambda, \bleta, \bgamma, \wtbbeta, \btheta, \{\btheta^{(r)}\}_{r=1}^R, \{\wtbalpha^{(r)}\}_{r=1}^R] \sim
            \mbox{IG}\Big(a_{\epsilon} + N/2, b_{\epsilon} +(1/2) \bxi\trans \bxi \Big),
        \end{align*}
        where $\bxi = \y - \sum_{r=1}^R \big(\I_n \otimes \big( \bDelta_{\lambda}(\btheta, \btheta^{(r)})\wtbalpha^{(r)}\big)\trans \big) \Z \bDelta_{\lambda}(\btheta, \btheta^{(r)}) \wtbalpha^{(r)} -\W \bleta - \X\bDelta_{\lambda}(\bgamma)\wtbbeta$.
        \item[] Sampling scheme for $\sigma$. Draw
        \begin{align*}
            [ \sigma \mid \bgamma, \btheta, \rho] \sim  \mbox{IG} \Big(a_{\sigma} + P, b_{\sigma} + (1/2) \xi^{\rho} \Big),
        \end{align*}
        where $\xi^{\rho} = \big(\bU\bgamma\big)\trans \bDelta_{\rho}\diag\{\d\} \bU\bgamma + \big(\bU\btheta\big)\trans \bDelta_{\rho}\diag\{\d\} \bU\btheta - 2\rho\big(\bU\btheta\big)\trans \bDelta_{\rho}\bU\bgamma$.
        \item[] Sampling scheme for $\rho$. Since it's hard to get a close form for the conditional posterior distribution of $\rho$, we adopt Metropolis-Hastings algorithm (MH) \citep{MH1953} to update it. 
        Let $f_{\rho}(\rho)= \exp \Big\{ -\xi^{\rho} /(2\sigma) \Big\}\prod_{p=1}^P \Big(d_{p}^2 -\rho^2/\delta^2\Big)^{-1/2}$
        and we then have
        \begin{align*}
            \bpi(\rho \mid \bgamma, \btheta, \sigma)
            \propto f_{\rho}(\rho) \mbox{U}(-1, 1).
        \end{align*}
        Generate candidate $\rho^{'}$ based on random walk proposal.
        If $|\rho^{'}| > 1$, we keep current value of $\rho$; Otherwise calculate $R_{\rho} = f_{\rho}(\rho^{'})/f_{\rho}(\rho)$ and draw $\nu_{\rho} \sim \mbox{U}(0,1)$, and update $\rho$ with $\rho^{'}$ when $\nu_{\rho} \le \mbox{min}(1, R_{\rho})$.
        \item[] Sampling scheme for $\lambda$. Similarly, $\lambda$ is updated by HM algorithm. 
        Let $f_{\lambda}(\lambda) = \exp\{
            -1/(2\sigma_{\epsilon}) \bxi\trans \bxi\}$,
        and we then have
        \begin{align*}
            \bpi \Big(\lambda \mid \mathcal{D}, \bleta, \bgamma, \wtbbeta, \btheta, \{\btheta^{(r)}\}_{r=1}^R, \{\wtbalpha^{(r)}\}_{r=1}^R \Big)
            \propto f_{\lambda}(\lambda)\mbox{U}(0, \lambda_{max}),
        \end{align*}
        where $\mbox{U}(0, \lambda_{max})$ is the prior distribution for $\lambda$. 
    We sample the proposed $\lambda$ from a random walk proposal.  If $\lambda^{'}>\lambda_{max}$ or $\lambda^{'} < 0$, we keep current value of $\lambda$; Otherwise calculate $R_{\lambda} = f_{\lambda}(\lambda^{'})/f_{\lambda}(\lambda)$ and draw $\nu_{\lambda} \sim \mbox{U}(0,1)$, and update $\lambda$ with $\lambda^{'}$ when $\nu_{\lambda} \le \mbox{min}(1, R_{\lambda})$.
\end{itemize}

\section*{Section C. Additional tables for the data application}

We provide the detailed brain atlas mapping that is used for the ABCD data processing in Table \ref{tab:abb2-new}.
\begin{table}[!htp]
\resizebox{\textwidth}{!}{%
\begin{tabular}{@{}lclll@{}}
\toprule
\textbf{Node} & \textbf{Subregion} & \textbf{Network} & \textbf{Lobe} & \textbf{Anatomy} \\
\toprule\bottomrule
\multicolumn{5}{c}{\textbf{Right hemisphere}} \\
\toprule\midrule
Calcarine     & 1 & Visual    & Occipital      & Calcarine Cortex  \\
& 2 & Visual    & Occipital      & Calcarine Cortex  \\
Cerebellum Crus I
              & 2 & Uncertain & Cerebellum     & Cerebellum White Matter  \\
Inf Parietal  & 1 & Fronto-parietal Task Control & Parietal & Supramarginal Gyrus  \\
Insula        & 5 & Salience  & Insula         & Anterior Insula  \\
Medial Sup Frontal
              & 1  & Default Mode     & Frontal   & Superior Frontal Gyrus Medial Segment  \\
Mid Cingulum  & 2  & Memory Retrieval & Limbic    & Middle Cingulate Gyrus  \\
Mid Occipital & 4  & Visual           & Occipital & Middle Occipital Gyrus  \\
Mid Temporal  & 2  & Default Mode     & Temporal  & Cerebral White Matter   \\
Orbital Inf Frontal
              & 1  & Default Mode & Frontal & Orbital Part Of The Inferior Frontal Gyrus  \\
Precentral    & 3  & Sensory/Somatomotor Hand & Frontal & Precentral Gyrus  \\
Precuneus     & 2  & Default Mode & Parietal & Cerebral White Matter  \\
& 4  & Default Mode     & Parietal    & Precuneus  \\
& 6  & Memory Retrieval & Parietal    & Precuneus  \\
Rolandic Oper & 2  & Auditory         & Frontal     & Cerebral White Matter  \\
Sup Frontal   & 4  & Default Mode     & Frontal     & Superior Frontal Gyrus  \\
\toprule\bottomrule
\multicolumn{5}{c}{\textbf{Left hemisphere}} \\
\toprule\midrule
Angular        & 1  & Default Mode     & Parietal   & Angular Gyrus  \\
Ant Cingulum   & 2  & Default Mode     & Limbic     & Superior Frontal Gyrus Medial Segment  \\
& 5  & Salience         & Limbic     &  --  \\
Calcarine      & 1  & Visual           & Occipital  & Calcarine Cortex  \\
Cuneus         & 1  & Visual           & Occipital  & Cuneus  \\
Hippocampus    & 1  & Default Mode     & Limbic     & Posterior Cingulate Gyrus  \\
Inf Parietal   & 2  & Fronto-parietal Task Control & Parietal & Supramarginal Gyrus  \\
Inf Temporal   & 1  & Dorsal Attention & Temporal   & Fusiform Gyrus  \\
Insula         & 3  & Salience         & Insula     & Anterior Insula  \\
Lingual        & 2  & Visual           & Occipital  & Lingual Gyrus  \\
Medial Sup Frontal
               & 2  & Default Mode & Frontal & Superior Frontal Gyrus Medial Segment  \\
& 4  & Fronto-parietal Task Control & Frontal & Supplementary Motor Cortex  \\
Mid Frontal    & 1  & Default Mode     & Frontal    & Middle Frontal Gyrus  \\
& 2  & Fronto-parietal Task Control & Frontal & Superior Frontal Gyrus    \\
Postcentral    & 3  & Sensory/Somatomotor Hand & Parietal & Superior Parietal Lobule  \\
& 4  & Sensory/Somatomotor Hand & Parietal & --  \\
Post Cingulum  & 3  & Default Mode     & Limbic     & Posterior Cingulate Gyrus  \\
Precentral     & 1  & Sensory/Somatomotor Hand & Frontal & Precentral Gyrus  \\
Rolandic Oper  & 1  & Cingulo-opercular Task Control & Frontal & Central Operculum  \\
SupraMarginal  & 1  & Auditory         & Parietal   & Cerebral White Matter  \\
Supp Motor Area  & 1  & Sensory/Somatomotor Hand & Frontal & --  \\
\midrule \bottomrule
\end{tabular}}
\caption{Details of nodes identified in both node- and network-level features. According to the canonical functional systems \citep{Power2011}, we include the specific subregion index for each node.}
\label{tab:abb2-new}
\end{table}

\section*{Section D. Sensitivity analysis}
In the main analysis, we consider the squared exponential kernel for the covariance matrices $\bSigma_{\gamma}$ and $\bSigma_{\theta}$ to accommodate spatial contiguity of the signals. Here we attempted different kernels for the covariance matrices to see the robustness of JNNTs. Specifically we consider a marginal case where $\bSigma_{\gamma} = \bSigma_{\theta} = \I_{P}$, and a symmetry
case where
$\big(\bSigma_{\gamma}\big)_{j, j} = \big(\bSigma_{\theta}\big)_{j, j}=1$, and
$\big(\bSigma_{\gamma}\big)_{j_1, j_2} = \big(\bSigma_{\theta}\big)_{j_1, j_2} = 0.2
$
if the $j_1$-th and $j_2$-th nodes are at symmetric locations on the two hemispheres, otherwise 0. Then we tuned the models as previously. For the marginal case, we observed 32 signaling nodes and 2 sub-networks; And For the symmetric case, we identified 25 signaling nodes and 2 sub-networks. To compare the results, we reported intersection, symmetric difference ($|S \triangle S_{0}|$) and adjusted rand index (ARI) of identified signaling regions between marginal or symmetric case and results from squared exponential kernel. Similarly, we reported these for signaling edges in the identified network between different covariance matrices. Results are summarized in Table \ref{tbl:Cov}, which implies the selection are robust to different kernels.

\begin{table}[!htp]
\centering
\begin{tabular}{lccc@{\hskip -0.05in}c@{\hskip -0.05in}ccc}
\toprule
\multirow{2}*{Structure} & \multicolumn{3}{c}{Region (node)} &\hphantom{a} & \multicolumn{3}{c}{Network (edge)} \\
\cline{2-4} \cline{6-8}
& \multicolumn{1}{l}{$|S \cap S_{0}|$} & \multicolumn{1}{l}{$|S \triangle S_{0}|$} & \multicolumn{1}{l}{ARI}
&
& \multicolumn{1}{l}{$|S \cap S_{0}|$} & \multicolumn{1}{l}{$|S \triangle S_{0}|$} & \multicolumn{1}{l}{ARI} \\
\toprule
\midrule
Marginal  & 24 & 6 & 0.852   && 24 & 4  & 0.923 \\
Symmetric & 25 & 5 & 0.878   && 24 & 4 & 0.923 \\
\midrule
\bottomrule
\end{tabular}
\caption{Comparison between neurobiomarkers identified by JNNTs with different kernels. Results from squared exponential kernel are considered as the baseline.}
\label{tbl:Cov}
\end{table}

\bibliographystyle{./bibtex/rss}
\bibliography{./bibtex/Thresholded}

\providecommand{\noopsort}[1]{}
\begin{thebibliography}{57}
\expandafter\ifx\csname natexlab\endcsname\relax\def\natexlab#1{#1}\fi
\expandafter\ifx\csname url\endcsname\relax
  \def\url#1{\texttt{#1}}\fi
\expandafter\ifx\csname urlprefix\endcsname\relax\def\urlprefix{URL }\fi

\bibitem[{Anticevic \emph{et~al.}(2012)Anticevic, Cole, Murray, Corlett, Wang
  and Krystal}]{anticevic2012role}
Anticevic, A., Cole, M.~W., Murray, J.~D., Corlett, P.~R., Wang, X.-J. and
  Krystal, J.~H. (2012) The role of default network deactivation in cognition
  and disease.
\newblock \emph{Trends in Cognitive Sciences}, \textbf{16}, 584--592.

\bibitem[{Cai \emph{et~al.}(2020)Cai, Kang and Yu}]{Cai2020}
Cai, Q., Kang, J. and Yu, T. (2020) Bayesian network marker selection via the
  thresholded graph laplacian gaussian prior.
\newblock \emph{Bayesian Analysis}, \textbf{15}, 79--102.

\bibitem[{Carvalho \emph{et~al.}(2009)Carvalho, Polson and
  Scott}]{Horseshoe2009}
Carvalho, C.~M., Polson, N.~G. and Scott, J.~G. (2009) Handling sparsity via
  the horseshoe.
\newblock In \emph{Proceedings of the 12th International Conference on
  Artificial Intelligence and Statistics} (eds. D.~van Dyk and M.~Welling),
  vol.~5 of \emph{Proceedings of Machine Learning Research},  73--80. Hilton
  Clearwater Beach Resort, Clearwater Beach, Florida USA: PMLR.

\bibitem[{Casey \emph{et~al.}(2018)Casey, Cannonier, Conley, Cohen, Barch,
  Heitzeg \emph{et~al.}}]{casey2018adolescent}
Casey, B., Cannonier, T., Conley, M.~I., Cohen, A.~O., Barch, D.~M., Heitzeg,
  M.~M. \emph{et~al.} (2018) The adolescent brain cognitive development
  {(ABCD)} study: {Imaging} acquisition across 21 sites.
\newblock \emph{Developmental Cognitive Neuroscience}, \textbf{32}, 43--54.

\bibitem[{Chang \emph{et~al.}(2018)Chang, Kundu and Long}]{chang2018scalable}
Chang, C., Kundu, S. and Long, Q. (2018) Scalable bayesian variable selection
  for structured high-dimensional data.
\newblock \emph{Biometrics}, \textbf{74}, 1372--1382.

\bibitem[{Chen \emph{et~al.}(2016)Chen, Chu, Yuan and Wu}]{chen2016bayesian}
Chen, R.-B., Chu, C.-H., Yuan, S. and Wu, Y.~N. (2016) Bayesian sparse group
  selection.
\newblock \emph{Journal of Computational and Graphical Statistics},
  \textbf{25}, 665--683.

\bibitem[{Cohen \emph{et~al.}(2016)Cohen, Conley, Dellarco and
  Casey}]{cohen2016impact}
Cohen, A.~O., Conley, M.~I., Dellarco, D.~V. and Casey, B. (2016) The impact of
  emotional cues on short-term and long-term memory during adolescence.
\newblock \emph{Proceedings of the Society for Neuroscience, San Diego, CA}.

\bibitem[{Cohen and D'Esposito(2016)}]{cohen2016segregation}
Cohen, J.~R. and D'Esposito, M. (2016) The segregation and integration of
  distinct brain networks and their relationship to cognition.
\newblock \emph{Journal of Neuroscience}, \textbf{36}, 12083--12094.

\bibitem[{Dubois \emph{et~al.}(2018)Dubois, Galdi, Paul and
  Adolphs}]{Dubois2018}
Dubois, J., Galdi, P., Paul, L.~K. and Adolphs, R. (2018) A distributed brain
  network predicts general intelligence from resting-state human neuroimaging
  data.
\newblock \emph{Philosophical Transactions of the Royal Society B: Biological
  Sciences}, \textbf{373}, 20170284.

\bibitem[{Fan and Lv(2010)}]{fan2010selective}
Fan, J. and Lv, J. (2010) A selective overview of variable selection in high
  dimensional feature space.
\newblock \emph{Statistica Sinica}, \textbf{20}, 101--148.

\bibitem[{Feng \emph{et~al.}(2020)Feng, Li, Song and Zhu}]{feng2020bayesian}
Feng, X., Li, T., Song, X. and Zhu, H. (2020) Bayesian scalar on image
  regression with nonignorable nonresponse.
\newblock \emph{Journal of the American Statistical Association}, \textbf{115},
  1574--1597.

\bibitem[{Gao \emph{et~al.}(2019)Gao, Greene, Constable and
  Scheinost}]{gao2019combining}
Gao, S., Greene, A.~S., Constable, R.~T. and Scheinost, D. (2019) Combining
  multiple connectomes improves predictive modeling of phenotypic measures.
\newblock \emph{Neuroimage}, \textbf{201}, 116038.

\bibitem[{Garavan \emph{et~al.}(2018)Garavan, Bartsch, Conway, Decastro,
  Goldstein, Heeringa \emph{et~al.}}]{garavan2018recruiting}
Garavan, H., Bartsch, H., Conway, K., Decastro, A., Goldstein, R.~Z., Heeringa,
  S. \emph{et~al.} (2018) Recruiting the {ABCD} sample: Design considerations
  and procedures.
\newblock \emph{Developmental Cognitive Neuroscience}, \textbf{32}, 16--22.

\bibitem[{Gelman and Rubin(1992)}]{gelman1992}
Gelman, A. and Rubin, D.~B. (1992) Inference from iterative simulation using
  multiple sequences.
\newblock \emph{Statistical Science}, \textbf{7}, 457--472.

\bibitem[{Goldsmith \emph{et~al.}(2014)Goldsmith, Huang and
  Crainiceanu}]{goldsmith2014smooth}
Goldsmith, J., Huang, L. and Crainiceanu, C.~M. (2014) Smooth scalar-on-image
  regression via spatial bayesian variable selection.
\newblock \emph{Journal of Computational and Graphical Statistics},
  \textbf{23}, 46--64.

\bibitem[{Guhaniyogi \emph{et~al.}(2017)Guhaniyogi, Qamar and
  Dunson}]{guhaniyogi2017bayesian}
Guhaniyogi, R., Qamar, S. and Dunson, D.~B. (2017) Bayesian tensor regression.
\newblock \emph{The Journal of Machine Learning Research}, \textbf{18},
  2733--2763.

\bibitem[{Hagler~Jr. \emph{et~al.}(2019)Hagler~Jr., Hatton, Cornejo, Makowski,
  Fair, Dick \emph{et~al.}}]{hagler2019image}
Hagler~Jr., D.~J., Hatton, S.~N., Cornejo, M.~D., Makowski, C., Fair, D.~A.,
  Dick, A.~S. \emph{et~al.} (2019) Image processing and analysis methods for
  the adolescent brain cognitive development study.
\newblock \emph{Neuroimage}, \textbf{202}, 116091.

\bibitem[{Hastie \emph{et~al.}(2004)Hastie, Tibshirani and
  Friedman}]{hastie2004}
Hastie, T., Tibshirani, R. and Friedman, J. (2004) Optimal predictive model
  selection.
\newblock \emph{Journal of the Royal Statistical Society: Series B (Statistical
  Methodology)}, \textbf{66}, 209--233.

\bibitem[{Hung and Wang(2013)}]{hung2013matrix}
Hung, H. and Wang, C.-C. (2013) Matrix variate logistic regression model with
  application to {EEG} data.
\newblock \emph{Biostatistics}, \textbf{14}, 189--202.

\bibitem[{Jenkinson \emph{et~al.}(2012)Jenkinson, Beckmann, Behrens, Woolrich
  and Smith}]{jenkinson2012fsl}
Jenkinson, M., Beckmann, C.~F., Behrens, T.~E., Woolrich, M.~W. and Smith,
  S.~M. (2012) {FSL}.
\newblock \emph{Neuroimage}, \textbf{62}, 782--790.

\bibitem[{Kang \emph{et~al.}(2018)Kang, Reich and Staicu}]{kang2018scalar}
Kang, J., Reich, B.~J. and Staicu, A.-M. (2018) Scalar-on-image regression via
  the soft-thresholded gaussian process.
\newblock \emph{Biometrika}, \textbf{105}, 165--184.

\bibitem[{Kolda and Bader(2009)}]{kolda2009tensor}
Kolda, T.~G. and Bader, B.~W. (2009) Tensor decompositions and applications.
\newblock \emph{SIAM Review}, \textbf{51}, 455--500.

\bibitem[{Kozlovskiy \emph{et~al.}(2012)Kozlovskiy, Vartanov, Nikonova, Pyasik
  and Velichkovsky}]{Kozlovskiy2012}
Kozlovskiy, S.~A., Vartanov, A.~V., Nikonova, E.~Y., Pyasik, M.~M. and
  Velichkovsky, B.~M. (2012) The cingulate cortex and human memory process.
\newblock \emph{Psychology in Russia: State of Art}, \textbf{6}, 231--243.

\bibitem[{Lee \emph{et~al.}(2013)Lee, Smyser and Shimony}]{lee2013resting}
Lee, M.~H., Smyser, C.~D. and Shimony, J.~S. (2013) Resting-state fmri: A
  review of methods and clinical applications.
\newblock \emph{American Journal of Neuroradiology}, \textbf{34}, 1866--1872.

\bibitem[{Leech \emph{et~al.}(2012)Leech, Braga and Sharp}]{Leech2012a}
Leech, R., Braga, R. and Sharp, D.~J. (2012) Echoes of the brain within the
  posterior cingulate cortex.
\newblock \emph{Journal of Neuroscience}, \textbf{32}, 215--222.

\bibitem[{Leech and Sharp(2014)}]{Leech2014}
Leech, R. and Sharp, D.~J. (2014) The role of the posterior cingulate cortex in
  cognition and disease.
\newblock \emph{Brain}, \textbf{137}, 12--32.

\bibitem[{Li and Li(2008)}]{li2008network}
Li, C. and Li, H. (2008) Network-constrained regularization and variable
  selection for analysis of genomic data.
\newblock \emph{Bioinformatics}, \textbf{24}, 1175--1182.

\bibitem[{Li and Zhang(2010)}]{li2010bayesian}
Li, F. and Zhang, N.~R. (2010) Bayesian variable selection in structured
  high-dimensional covariate spaces with applications in genomics.
\newblock \emph{Journal of the American Statistical Association}, \textbf{105},
  1202--1214.

\bibitem[{Li \emph{et~al.}(2015)Li, Zhang, Wang, Gonzalez, Maresh and
  Coan}]{10.1214/15-AOAS818}
Li, F., Zhang, T., Wang, Q., Gonzalez, M.~Z., Maresh, E.~L. and Coan, J.~A.
  (2015) Spatial {Bayesian} variable selection and grouping for
  high-dimensional scalar-on-image regression.
\newblock \emph{The Annals of Applied Statistics}, \textbf{9}, 687--713.

\bibitem[{Li \emph{et~al.}(2018)Li, Xu, Zhou and Li}]{li2018tucker}
Li, X., Xu, D., Zhou, H. and Li, L. (2018) Tucker tensor regression and
  neuroimaging analysis.
\newblock \emph{Statistics in Biosciences}, \textbf{10}, 520--545.

\bibitem[{Ma \emph{et~al.}(2024)Ma, Kou, Lin, Cho and Chiu}]{ma2022multimodal}
Ma, J., Kou, W., Lin, M., Cho, C.~C. and Chiu, B. (2024) Multimodal image
  classification by multiview latent pattern extraction, selection, and
  correlation.
\newblock \emph{IEEE Transactions on Neural Networks and Learning Systems},
  \textbf{35}, 8134--8148.

\bibitem[{Maddock \emph{et~al.}(2003)Maddock, Garrett and
  Buonocore}]{Maddock2003}
Maddock, R.~J., Garrett, A.~S. and Buonocore, M.~H. (2003) Posterior cingulate
  cortex activation by emotional words: {{fMRI}} evidence from a valence
  decision task.
\newblock \emph{Human Brain Mapping}, \textbf{18}, 30--41.

\bibitem[{Metropolis \emph{et~al.}(1953)Metropolis, Rosenbluth, Rosenbluth,
  Teller and Teller}]{MH1953}
Metropolis, N., Rosenbluth, A.~W., Rosenbluth, M.~N., Teller, A.~H. and Teller,
  E. (1953) Equation of state calculations by fast computing machines.
\newblock \emph{The Journal of Chemical Physics}, \textbf{21}, 1087--1092.

\bibitem[{Ni \emph{et~al.}(2019)Ni, Stingo and
  Baladandayuthapani}]{ni2019bayesian}
Ni, Y., Stingo, F.~C. and Baladandayuthapani, V. (2019) Bayesian graphical
  regression.
\newblock \emph{Journal of the American Statistical Association}, \textbf{114},
  184--197.

\bibitem[{Nielsen \emph{et~al.}(2005)Nielsen, Balslev and Hansen}]{Nielsen2005}
Nielsen, F.~A., Balslev, D. and Hansen, L.~K. (2005) Mining the posterior
  cingulate: {Segregation} between memory and pain components.
\newblock \emph{NeuroImage}, \textbf{27}, 520--532.

\bibitem[{O'hara and Sillanp{\"a}{\"a}(2009)}]{o2009review}
O'hara, R.~B. and Sillanp{\"a}{\"a}, M.~J. (2009) A review of bayesian variable
  selection methods: what, how and which.
\newblock \emph{Bayesian Analysis}, \textbf{4}, 85--117.

\bibitem[{Pearson \emph{et~al.}(2011)Pearson, Heilbronner, Barack, Hayden and
  Platt}]{Pearson2011}
Pearson, J.~M., Heilbronner, S.~R., Barack, D.~L., Hayden, B.~Y. and Platt,
  M.~L. (2011) Posterior cingulate cortex: {Adapting} behavior to a changing
  world.
\newblock \emph{Trends in Cognitive Sciences}, \textbf{15}, 143--151.

\bibitem[{Power \emph{et~al.}(2011)Power, Cohen, Nelson, Wig, Barnes, Church
  \emph{et~al.}}]{Power2011}
Power, J.~D., Cohen, A.~L., Nelson, S.~M., Wig, G.~S., Barnes, K.~A., Church,
  J.~A. \emph{et~al.} (2011) Functional network organization of the human
  brain.
\newblock \emph{Neuron}, \textbf{72}, 665--678.

\bibitem[{Ray and Bhattacharya(2018)}]{Ray2018}
Ray, P. and Bhattacharya, A. (2018) Signal adaptive variable selector for the
  horseshoe prior.
\newblock ArXiv:1810.09004.

\bibitem[{Satterthwaite \emph{et~al.}(2013)Satterthwaite, Elliott, Gerraty,
  Ruparel, Loughead, Calkins \emph{et~al.}}]{satterthwaite2013improved}
Satterthwaite, T.~D., Elliott, M.~A., Gerraty, R.~T., Ruparel, K., Loughead,
  J., Calkins, M.~E. \emph{et~al.} (2013) An improved framework for confound
  regression and filtering for control of motion artifact in the preprocessing
  of resting-state functional connectivity data.
\newblock \emph{Neuroimage}, \textbf{64}, 240--256.

\bibitem[{Schultz and Cole(2016)}]{schultz2016higher}
Schultz, D.~H. and Cole, M.~W. (2016) Higher intelligence is associated with
  less task-related brain network reconfiguration.
\newblock \emph{Journal of Neuroscience}, \textbf{36}, 8551--8561.

\bibitem[{Shen \emph{et~al.}(2017)Shen, Finn, Scheinost, Rosenberg, Chun,
  Papademetris \emph{et~al.}}]{shen2017using}
Shen, X., Finn, E.~S., Scheinost, D., Rosenberg, M.~D., Chun, M.~M.,
  Papademetris, X. \emph{et~al.} (2017) Using connectome-based predictive
  modeling to predict individual behavior from brain connectivity.
\newblock \emph{Nature Protocols}, \textbf{12}, 506--518.

\bibitem[{Simon \emph{et~al.}(2013)Simon, Friedman, Hastie and
  Tibshirani}]{simon2013sparse}
Simon, N., Friedman, J., Hastie, T. and Tibshirani, R. (2013) A sparse-group
  lasso.
\newblock \emph{Journal of Computational and Graphical Statistics},
  \textbf{22}, 231--245.

\bibitem[{Smith \emph{et~al.}(2020)Smith, Johnson, Jones and
  et~al.}]{smith2020default}
Smith, J., Johnson, K., Jones, A. and et~al. (2020) Default mode network
  connectivity and general mental ability in young adults.
\newblock \emph{Cognitive, Affective, \& Behavioral Neuroscience}, \textbf{20},
  435--447.

\bibitem[{Stingo \emph{et~al.}(2011)Stingo, Chen, Tadesse and
  Vannucci}]{stingo2011incorporating}
Stingo, F.~C., Chen, Y.~A., Tadesse, M.~G. and Vannucci, M. (2011)
  Incorporating biological information into linear models: A bayesian approach
  to the selection of pathways and genes.
\newblock \emph{The annals of applied statistics}, \textbf{5}, 1978--2002.

\bibitem[{Tibshirani(1996)}]{LASSO}
Tibshirani, R. (1996) Regression shrinkage and selection via the lasso.
\newblock \emph{Journal of the Royal Statistical Society. Series B (Statistical
  Methodological)}, \textbf{58}, 267--288.

\bibitem[{Wang \emph{et~al.}(2021)Wang, Lin, Cole and Zhang}]{wang2021learning}
Wang, L., Lin, F.~V., Cole, M. and Zhang, Z. (2021) Learning clique subgraphs
  in structural brain network classification with application to crystallized
  cognition.
\newblock \emph{Neuroimage}, \textbf{225}, 117493.

\bibitem[{Wang \emph{et~al.}(2019)Wang, Zhang and Dunson}]{wang2019common}
Wang, L., Zhang, Z. and Dunson, D. (2019) Common and individual structure of
  brain networks.
\newblock \emph{The Annals of Applied Statistics}, \textbf{13}, 85--112.

\bibitem[{Wang \emph{et~al.}(2017)Wang, Zhu and
  Initiative}]{wang2017generalized}
Wang, X., Zhu, H. and Initiative, A. D.~N. (2017) Generalized scalar-on-image
  regression models via total variation.
\newblock \emph{Journal of the American Statistical Association}, \textbf{112},
  1156--1168.

\bibitem[{Wu \emph{et~al.}(2024)Wu, Guo and Kang}]{wu2022bayesian}
Wu, B., Guo, Y. and Kang, J. (2024) Bayesian spatial blind source separation
  via the thresholded gaussian process.
\newblock \emph{Journal of the American Statistical Association}, \textbf{119},
  422--433.

\bibitem[{Zhao \emph{et~al.}(2022)Zhao, Chen, Cai, Lichenstein, Potenza and
  Yip}]{https://doi.org/10.1002/sim.9488}
Zhao, Y., Chen, T., Cai, J., Lichenstein, S., Potenza, M.~N. and Yip, S.~W.
  (2022) Bayesian network mediation analysis with application to the brain
  functional connectome.
\newblock \emph{Statistics in Medicine}, \textbf{41}, 3991--4005.

\bibitem[{Zhao \emph{et~al.}(2016)Zhao, Chung, Johnson, Moreno and
  Long}]{zhao2016hierarchical}
Zhao, Y., Chung, M., Johnson, B.~A., Moreno, C.~S. and Long, Q. (2016)
  Hierarchical feature selection incorporating known and novel biological
  information: Identifying genomic features related to prostate cancer
  recurrence.
\newblock \emph{Journal of the American Statistical Association}, \textbf{111},
  1427--1439.

\bibitem[{Zhao \emph{et~al.}(2021)Zhao, Li and Caffo}]{zhao2021multimodal}
Zhao, Y., Li, L. and Caffo, B.~S. (2021) Multimodal neuroimaging data
  integration and pathway analysis.
\newblock \emph{Biometrics}, \textbf{77}, 879--889.

\bibitem[{Zhao \emph{et~al.}(2023)Zhao, Wu and Kang}]{zhao2022bayesian}
Zhao, Y., Wu, B. and Kang, J. (2023) Bayesian interaction selection model for
  multimodal neuroimaging data analysis.
\newblock \emph{Biometrics}, \textbf{79}, 655--668.

\bibitem[{Zhou \emph{et~al.}(2013)Zhou, Li and Zhu}]{zhou2013tensor}
Zhou, H., Li, L. and Zhu, H. (2013) Tensor regression with applications in
  neuroimaging data analysis.
\newblock \emph{Journal of the American Statistical Association}, \textbf{108},
  540--552.

\bibitem[{Zhou \emph{et~al.}(2020)Zhou, Thung, Liu, Shi, Zhang and
  Shen}]{zhou2020multi}
Zhou, T., Thung, K.-H., Liu, M., Shi, F., Zhang, C. and Shen, D. (2020)
  Multi-modal latent space inducing ensemble {SVM} classifier for early
  dementia diagnosis with neuroimaging data.
\newblock \emph{Medical image analysis}, \textbf{60}, 101630.

\bibitem[{Zhu \emph{et~al.}(2023)Zhu, Li and Zhao}]{zhu2022statistical}
Zhu, H., Li, T. and Zhao, B. (2023) Statistical learning methods for
  neuroimaging data analysis with applications.
\newblock \emph{Annual Review of Biomedical Data Science}, \textbf{6}, 73--104.

\end{thebibliography}

\end{document}